\documentclass[fleqn]{tlp}

\usepackage{latexsym}
\usepackage{amssymb}
\usepackage{amsmath}
\usepackage[all]{xy}

%% only for TPLP style
\newtheorem{definition}{Definition} % [section]
\newtheorem{example}{Example} % [section]
\newtheorem{lemma}{Lemma} % [section]
\newtheorem{proposition}{Proposition} % [section]
\newtheorem{theorem}{Theorem} % [section]
 % [section]

%%colors:
\usepackage{color}

% for program texts:
\makeatletter
\newenvironment{prog}{\vspace{0.7ex}\par
\setlength{\parindent}{0.7cm}
\obeylines\@vobeyspaces\tt}{\vspace{0.7ex}\noindent
}
\makeatother
\newcommand{\startprog}{\begin{prog}}
\newcommand{\stopprog}{\end{prog}\noindent}
 % program text in subscripts
\newcommand{\conc}{\mathsf{conc}} 
\newcommand{\symb}{\mathsf{symb}} 
\newcommand{\alt}{\mathsf{alt}} 
\newcommand{\traces}{\mathit{Traces}} 
\newcommand{\depth}{\mathit{depth}} 
\newcommand{\rt}{\mathit{root}} 
\newcommand{\inp}{\mathsf{input}} 
\newcommand{\trace}{\mathit{trace}}

\newcommand{\id}{{\mathit{id}}} %% empty substitution

\newcommand{\sgeq}{\geqslant}
\newcommand{\sleq}{\leqslant}

\newcommand{\alttrace}{\mathsf{alt\_trace}}
\newcommand{\true}{\mathsf{true}}
\newcommand{\mgu}{\mathsf{mgu}}
\newcommand{\error}{\mathsf{error}}
\newcommand{\defined}{\mathsf{defined}}
\newcommand{\call}{\mathsf{call}}
\newcommand{\fail}{\mathsf{fail}}
\newcommand{\eval}{\mathsf{eval}}
\newcommand{\clauses}{\mathsf{clauses}}
\newcommand{\midd}{\!\mid\!}
\newcommand{\sep}{\mathit{\;]\![\;}}
\newcommand{\success}{\mbox{\footnotesize \textsc{success}}}
\newcommand{\failsc}{\mbox{\footnotesize \textsc{fail}}}
\newcommand{\errorsc}{\mbox{\footnotesize \textsc{error}}}

% Letters

\def\defemb#1#2{\expandafter\def\csname #1\endcsname
                              {\relax\ifmmode #2\else\hbox{$#2$}\fi}}
\defemb{cA}{{\cal A}}
\defemb{cB}{{\cal B}}
\defemb{cC}{{\cal C}}
\defemb{cD}{{\cal D}}
\defemb{cE}{{\cal E}}
\defemb{cF}{{\cal F}}
\defemb{cG}{{\cal G}}
\defemb{cH}{{\cal H}}
\defemb{cI}{{\cal I}}
\defemb{cJ}{{\cal J}}
\defemb{cL}{{\cal L}}
\defemb{cM}{{\cal M}}
\defemb{cN}{{\cal N}}
\defemb{cO}{{\cal O}}
\defemb{cP}{{\cal P}}
\defemb{cQ}{{\cal Q}}
\defemb{cR}{{\cal R}}
\defemb{cS}{{\cal S}}
\defemb{cT}{{\cal T}}
\defemb{cU}{{\cal U}}
\defemb{cV}{{\cal V}}
\defemb{cX}{{\cal X}}
\defemb{cZ}{{\cal Z}}

% Terms

\newcommand{\var}{{\cV}ar}

 % constructors
 % defined functions
\newcommand{\Var}{{\cal V}ar} % variables in an object
 % rewrite system
 % algebra of terms
 % set of all variables
\renewcommand{\emptyset}{\{\}} %%{\varnothing}
\renewcommand{\phi}{\varphi}

 % definitional tree
 % needed narrowing strategy
 % lazy narrowing strategy
 % top position
\newcommand{\ol}[1]{\overline{#1}}  % sequence of objects
   % program text in normal text
   % program text in normal text

   % renamed instance

% Arrows

\def \tuple#1{\langle #1 \rangle}

%%%%%%%%%%%%%%
\newcommand{\Dom}{\mathit{Dom}}
\newcommand{\Ran}{\mathit{Ran}}

\newcommand{\Hpos}{\mathcal{H}_{\mathit{pos}}}
\newcommand{\Hneg}{\mathcal{H}_{\mathit{neg}}}

%%%%%%%%%%%%%%

\begin{document}

\title[Concolic Testing in Logic Programming]{Concolic Testing in Logic Programming%
  \thanks{This work has been partially supported by the EU (FEDER) and
    the Spanish \emph{Ministerio de Econom\'{\i}a y Competitividad}
    under grant TIN2013-44742-C4-1-R and by the \emph{Generalitat
      Valenciana} under grant PROMETEOII/2015/013. 
    Part of this research was done while the third author was visiting
    the University of Reunion; G.\ Vidal gratefully acknowledges their
    hospitality.}  }%

\author[F.~Mesnard and \'E.~Payet and G.~Vidal]
{FRED MESNARD, \'ETIENNE PAYET \\
  LIM - Universit\'e de la R\'eunion, France\\
  \email{\{fred,epayet\}@univ-reunion.fr} 
\and
GERM\'AN VIDAL \\
MiST, DSIC, Universitat Polit\`ecnica de Val\`encia\\
%%Camino de Vera, S/N, 46022 Valencia, Spain    \\
\email{gvidal@dsic.upv.es}
}

% \pagerange{\pageref{firstpage}--\pageref{lastpage}}
% \volume{\textbf{10} (3):}
% \jdate{March 2002}
% \setcounter{page}{1}
% \pubyear{2002}

\maketitle

\label{firstpage}

\begin{abstract}
  Software testing is one of the most popular validation techniques in
  the software industry. Surprisingly, we can only find a few
  approaches to testing in the context of logic programming.
  In this paper, we introduce a systematic approach for dynamic
  testing that combines both concrete and symbolic execution. Our
  approach is fully automatic and guarantees full path coverage when
  it terminates.
  We prove some basic properties of our technique and illustrate its
  practical usefulness through a prototype implementation.\\
  
  \noindent
  \emph{To appear in Theory and Practice of Logic Programming (TPLP),
  Proc.\ of ICLP 2015.}\\
\end{abstract}

\begin{keywords}
 Symbolic execution, logic programming, testing.
\end{keywords}

%%%%%%%%%%%%%%%%%%%%%%%%%%%%%%%%%%%%%%%%%%%%%%%%%%%%%%%%%%%%%%%%%%%
\section{Introduction} \label{intro}

Essentially, software validation aims at ensuring that the developed
software complies with the original requirements. One of the most
popular validation approaches is \emph{software testing}, a process
that involves producing a test suite and then executing the system
with these test cases. The main drawback of this approach is that
designing a test suite with a high code coverage |i.e., covering as
many execution paths as possible| is a complex and time-consuming
task. As an alternative, one can use a tool for the random generation
of test cases, but then we are often faced with a poor code coverage.
Some hybrid approaches exist where random generation is
driven by the user, as in QuickCheck \cite{CH00}, but then again the
process may become complex and time-consuming.

Another popular, fully automatic approach to test case generation is
based on \emph{symbolic execution} \cite{Kin76,Cla76}. Basically,
symbolic execution considers unknown (symbolic) values for the input
parameters and, then, explores all feasible execution paths in a
non-deterministic way. Symbolic states include now a \emph{path
  condition} that stores the current constraints on symbolic values,
i.e., the conditions that must hold to reach a particular execution
point. For each final state, a test case is produced by solving
the constraints in the associated path condition.

A drawback of the previous approach, though, is that the constraints
in the path condition may become very complex. When these constraints
are not solvable, the only sound way to proceed is to stop the
execution path, often giving rise to a poor coverage.  Recently, a new
variant called \emph{concolic execution} \cite{GKS05,SMA05} that
combines both \emph{conc}rete and symb\emph{olic} execution has been
proposed as a basis for both model checking and test case
generation. The main advantage is that, now, when the constraints in
the symbolic execution become too complex, one can still take some
values from the concrete execution to simplify them. This is sound and
often allows one to explore a larger execution space.  Some successful
tools that are based on concolic execution are, e.g., SAGE
\cite{GLM12} and Java Pathfinder \cite{PR10}.

In the context of the logic programming paradigm, one can find a
flurry of \emph{static}, complete techniques for software analysis and
verification. However, only a few dynamic techniques for program
validation have been proposed. Dynamic, typically incomplete,
techniques have proven very useful for software validation in other
paradigms. In general, these techniques are sound so that they avoid
\emph{false positives}. This contrasts with typical static
verification methods which may produce some false positives due to the
abstraction techniques introduced to ensure completeness. Therefore,
we expect concolic execution to complement 
existing analysis and verification techniques for logic programs.

In this paper, we introduce a new, fully automatic scheme for
\emph{concolic testing} in logic programming. As in other paradigms,
concolic testing may help the programmer to systematically find
program bugs and generate test cases with a good code coverage. As it
is common, our approach is always sound but usually incomplete.  In
the context of logic programming, we consider that ``full path
coverage'' involves calling each predicate in all possible
ways. Consider, e.g., the logic program $P=\{ p(a)., ~p(b).\}$. Here,
one could assume that the execution of the goals in $\{p(a),p(b)\}$ is
enough for achieving a full path coverage. However, in this paper we
consider that full path coverage requires, e.g., the set
$\{p(X),p(a),p(b),p(c)\}$ so that we have a goal that matches both
clauses, one that only matches the first clause, one that only matches
the second clause, and one that matches no clause. We call this notion
\emph{choice} coverage, and it is specific of logic programming. To
the best of our knowledge, such a notion of coverage has not been
considered before. Typically, only a form of \emph{statement} coverage
has been considered, where only the clauses used in the considered
executions are taken into account. For guaranteeing choice coverage, a
new type of unification problems must be solved:
we have to produce goals in which the selected atom $A$ matches
the heads of some clauses, say $H_1,\ldots,H_n$, but does not match
the heads of some other clauses, say $H'_1,\ldots,H'_m$.
We provide a constructive algorithm for solving such unifiability
problems.

A prototype implementation of the concolic testing scheme for pure
Prolog, called \textsf{contest}, is publicly available from
\texttt{http://kaz.dsic.upv.es/contest.html}. The results from an
experimental evaluation point out the usefulness of the
approach. Besides logic programming and Prolog, our technique might
also be useful for other programming languages since there exist
several transformational approaches that ``compile in'' programs to
Prolog, like, e.g., \cite{GZAP10}.

Omitted proofs as well as some extensions can be found in the
online appendix.

%%%%%%%%%%%%%%%%%%%%%%%%%%%%%%%%%%%%%%%%%%%%%%%%%%%%%%%%%%%%%%%%%%%
\section{Concrete Semantics} \label{concrete}

The semantics of a logic program is usually given in terms of the SLD
relation on goals \cite{Llo87}. In this section, we present instead a
\emph{local} semantics which is similar to that of
\citeN{SESGF11}. Basically, this semantics deals with states that
contain all the necessary information to perform the next step (in
contrast to the usual semantics, where the SLD tree built so far is
also needed, e.g., for dealing with the cut).
In contrast to \cite{SESGF11}, for simplicity, in this paper we only
consider definite logic programs.
However, the main difference w.r.t.\ \cite{SESGF11} comes from the
fact that our concrete semantics only considers the computation of the
first solution for the initial goal. This is the way most Prolog
applications are used and, thus, our semantics should consider this
behaviour in order to measure the coverage in a realistic way.

Before presenting the transition rules of the concrete semantics, let
us introduce some auxiliary notions and notations. We refer
  the reader to \cite{Apt97} for the standard definitions and
  notations for logic programs.
The semantics is defined by means of a transition system on
\emph{states} of the form
$\tuple{\cB^1_{\delta_1}\midd\ldots\midd\cB^n_{\delta_n}}$, where
$\cB^1_{\delta_1}\midd\ldots\midd\cB^n_{\delta_n}$ is a sequence of
goals labeled with substitutions (the answer computed so far, when
restricted to the variables of the initial goal).  We denote sequences
with $S, S',\ldots$, where $\epsilon$ denotes the empty sequence.  In
some cases, we label a goal $\cB$ both with a substitution and a
program clause, e.g., $\cB_\delta^{H\leftarrow\cB}$, which is used to
determine the next clause to be used for an SLD resolution step (see
rules \textsf{choice} and \textsf{unfold} in Fig.~\ref{fig:concrete2}).
Note that the clauses of the program are not included in the state but
considered a global parameter since they are static.
In the following, given an atom $A$
and a logic program $P$, $\clauses(A,P)$
returns the sequence of renamed apart program clauses $c_1,\ldots,c_n$ 
from $P$ whose head unifies with $A$.
A syntactic object $s_1$ is \emph{more general} than a syntactic object
$s_2$, denoted $s_1 \sleq s_2$, if there exists a substitution
$\theta$ such that $s_1\theta = s_2$.
$\var(o)$ denotes the set of variables of the syntactic object $o$. 
For a substitution $\theta$, $\var(\theta)$ is defined as $\Dom(\theta)\cup\Ran(\theta)$.

\begin{figure}[t]
  \rule{\linewidth}{1pt}
  \[
  \hspace{-2ex}\begin{array}{r@{~}l}
    \mathsf{(success)} & {\displaystyle 
      \frac{~} 
        {\tuple{\mathsf{true}_\delta\midd S} \to \tuple{\success_\delta}}
        } 
        \\[2ex]

     \mathsf{(failure)} & {\displaystyle 
      \frac{~} 
        {\tuple{(\fail,\cB)_\delta} \to \tuple{\failsc_\delta}}
        }
        \hspace{24ex}
    \mathsf{(backtrack)} ~ {\displaystyle 
      \frac{S\neq\epsilon} 
        {\tuple{(\fail,\cB)_\delta\midd S} \to \tuple{S}}
        }\\[4ex]

     \textsf{(choice)} &  {\displaystyle 
      \frac{%%\defined(A) \wedge
        \clauses(A,\cP) = (c_1,\ldots,c_n)\wedge n>0 %%\wedge m~\mbox{is fresh}
      } 
        {\tuple{(A,\cB)_\delta\midd S} \to
          %\tuple{(A,\cB)_\delta^{c_1[!/!^m]} \midd \ldots\midd
          %(A,\cB)_\delta^{c_n[!/!^m]}\midd ~?^m_\delta\midd  S}
          \tuple{(A,\cB)_\delta^{c_1} \midd \ldots\midd
          (A,\cB)_\delta^{c_n}\midd  S}}
        } %\\[4ex]
        \hfill\hspace{1ex}
    \mathsf{(choice\_fail)} ~ {\displaystyle 
      \frac{%%\defined(A,\cP)\wedge 
        \clauses(A,\cP)=\emptyset} 
        {\tuple{(A,\cB)_\delta\midd S} \to \tuple{(\fail,\cB)_\delta\midd S}}
        }\\[3ex]

    \mathsf{(unfold)} & {\displaystyle 
      \frac{\mgu(A,H_1)=\sigma} 
        {\tuple{(A,\cB)_\delta^{H_1\leftarrow\cB_1}\midd S}
          \to \tuple{(\cB_1\sigma,\cB\sigma)_{\delta\sigma}\midd S}}
        } \\ %%[3ex]

    \end{array}
    \]
  \rule{\linewidth}{1pt}
  \caption{Concrete semantics} 
  \label{fig:concrete2}
\end{figure}

For simplicity, w.l.o.g., we only consider \emph{atomic} initial
goals. Therefore, given an atom $A$, an initial state has the form
$\tuple{A_\id}$, where $\id$ denotes the identity substitution.
The transition rules, shown in Figure~\ref{fig:concrete2}, proceed as
follows:
\begin{itemize}
\item In rules \textsf{success} and \textsf{failure}, we use fresh
  constants to denote a final state: $\tuple{\success_\delta}$ denotes
  that a sucessful derivation ended with computed answer substitution
  $\delta$, while $\tuple{\failsc_\delta}$ denotes a finitely failing
  derivation; recording $\delta$ for failing computations might be
  useful for debugging purposes.

\item Rule \textsf{backtrack} applies when the first goal in the
  sequence finitely fails, but there is at least one alternative
  choice. 

\item Rule \textsf{choice} represents the first stage of an SLD
  resolution step. If there is at least one clause whose head unifies
  with the leftmost atom, this rule introduces as many copies of a
  goal as clauses returned by function $\clauses$. If there is at
  least one matching clause, unfolding is then performed by rule
  \textsf{unfold}. Otherwise, if there is no matching clause, rule
  \textsf{choice\_fail} returns \textsf{fail} so that either rule
  \textsf{failure} or \textsf{backtrack} applies next.

\end{itemize}

\begin{example} \label{ex1}
  Consider the following logic program:
  \[
  \begin{array}{l@{~~~~~~}l@{~~~~~~}l}
    p(s(a)). &
    q(a).   &
    r(a).  \\
  
    p(s(X)) \leftarrow q(X). & q(b). & r(c). \\
    
    p(f(X)) \leftarrow r(X). & \\
    \end{array}
    \]
    Given the initial goal $p(f(X))$, we have the following successful
    computation (for clarity, we label each step with the applied
    rule):
    \[
    \begin{array}{llllll}
      \tuple{p(f(X))_\id} & \to^\mathsf{choice} &
      \tuple{p(f(X))_\id^{p(f(Y)) \leftarrow r(Y)}} & \to^\mathsf{unfold}
      & \tuple{r(X)_\id} \\
      & \to^\mathsf{choice} & \tuple{r(X)_\id^{r(a)}\midd r(X)_\id^{r(c)}}
      & \to^\mathsf{unfold}
      & \tuple{\mathsf{true}_{\{X/a\}}\midd r(X)_\id^{r(c)}} \\
      & \to^\mathsf{success} & \tuple{\success_{\{X/a\}}} 
    \end{array}
    \]
    Therefore, we have a successful computation for $p(f(X))$ with
    computed answer $\{X/a\}$. Observe that only the first answer is
    considered.
\end{example}
We do not formally prove the correctness of the concrete semantics,
but it is an easy consequence of the correctness of the semantics in
\cite{SESGF11}. Note that our rules can be seen as an instance for
pure Prolog without negation, where only the computation of the first
answer for the initial goal is considered.

%%%%%%%%%%%%%%%%%%%%%%%%%%%%%%%%%%%%%%%%%%%%%%%%%%%%%%%%%%%%%%%%%%%
\section{Concolic Execution Semantics} \label{concolic}

In this section, we introduce a concolic execution semantics for logic
programs that is a conservative extension of the concrete semantics of
the previous section. 
In this semantics, \emph{concolic states} have the form $\tuple{S\sep
  S'}$, where $S$ and $S'$ are sequences of (possibly labeled)
concrete and symbolic goals, respectively. In logic programming, the
notion of \emph{symbolic} execution is very natural: the structure of
both $S$ and $S'$ is the same, and the only difference is that some
atoms might be less instantiated in $S'$ than in $S$.

In the following, we let $\ol{o_n}$ denote the sequence of syntactic
objects $o_1,\ldots,o_n$.
Given an atom $A$, we let $\rt(A)= p/n$ if $A= p(\ol{t_n})$.
Now, given an atom $A$ with $\rt(A)=p/n$, an \emph{initial concolic
  state} has the form $\tuple{A_\id\sep p(\ol{X_n})_\id}$, where
$\ol{X_n}$ are different fresh variables. In the following, we assume
that every clause $c$ has a corresponding unique label, which we
denote by $\ell(c)$. By abuse of notation, we also denote by
$\ell(\ol{c_n})$ the set of labels $\{\ell(c_1),\ldots,\ell(c_n)\}$.

The semantics is given by the rules of the labeled transition
relation $\leadsto$ shown in Figure~\ref{fig:concolic}. Here, we
consider two kinds of labels for the transition relation:
\begin{itemize}
\item The empty label, $\diamond$, which is often implicit.
\item A label of the form $c(\ell(\ol{c_n}),\ell(\ol{d_k}))$, which
  represents a choice step. Here, $\ell(\ol{c_n})$ are the labels of
  the clauses matching the selected atom in the concrete goal, while
  $\ell(\ol{d_k})$ are the labels of the clauses matching the selected
  atom in the corresponding symbolic goal. Note that $\ell(\ol{c_n})
  \subseteq \ell(\ol{d_k})$ since the concrete goal is always an
  instance of the symbolic goal (see Theorem~\ref{th:invariant}
  below).
\end{itemize}
For each transition step $\cC_1 \leadsto_{c(\cL_1,\cL_2)} \cC_2$, the
first set of labels, $\cL_1$, is used to determine the execution
\emph{trace} of a concrete goal (see below). Traces are needed to keep
track of the execution paths already explored. The second set of
labels, $\cL_2$, is used to compute new goals that follow alternative
paths not yet explored, if any.

\begin{figure}[t]
  \rule{\linewidth}{1pt}\\[-1ex]
  \[
  \hspace{-2ex}\begin{array}{r@{~}l}
    \mathsf{(success)} & {\displaystyle 
      \frac{~~} 
        {\tuple{\mathsf{true}_\delta\midd S
            \sep\mathsf{true}_{\theta}\midd S'} 
          \leadsto_\diamond
          \tuple{\success_\delta \sep \success_\theta}}
        } \\[4ex]

    \mathsf{(failure)} & {\displaystyle 
      \frac{~} 
        {\tuple{(\mathsf{fail},\cB)_\delta
          \sep (\mathsf{fail},\cB')_\theta} \leadsto_\diamond 
        \tuple{\failsc_\delta \sep \failsc_\theta}}
        }\\[4ex]

    \mathsf{(backtrack)} & {\displaystyle 
      \frac{S\neq\epsilon} 
        {\tuple{(\mathsf{fail},\cB)_\delta\midd S \sep
          (\mathsf{fail},\cB')_\theta\midd S'} \leadsto_\diamond \tuple{S\sep S'}}
        }\\[4ex]

     \mathsf{(choice)} &  {\displaystyle 
      \frac{\clauses(A,\cP) = \ol{c_n}\wedge n>0 \wedge \clauses(A',\cP)=\ol{d_k}} 
        {\begin{array}{l@{~}l}
            \tuple{(A,\cB)_\delta\midd S \sep (A',\cB')_\theta\midd S'} 
            \leadsto_{\!c(\ell(\ol{c_n}),\ell(\ol{d_k}))} 
          \tuple{(A,\cB)_\delta^{c_1} \midd \ldots\midd
          (A,\cB)_\delta^{c_n}\midd  S\\[1ex]
        \hspace{47ex}\sep (A',\cB')_\theta^{c_1} \midd \ldots\midd
          (A',\cB')_\theta^{c_n}\midd  S'
        }\end{array}}
        }\\[8ex]

    \mathsf{(choice\_fail)} & {\displaystyle 
      \frac{ 
        \clauses(A,\cP)=\emptyset
      \wedge \clauses(A',\cP)=\ol{c_k}} 
        {\tuple{(A,\cB)_\delta\midd S \sep (A',\cB')_\theta\midd S'} 
          \leadsto_{c(\emptyset,\ell(\ol{c_k}))}
          \tuple{(\fail,\cB)_\delta\midd S\sep (\fail,\cB')_\theta\midd S'}}
        }\\[5ex]

    \mathsf{(unfold)} & {\displaystyle 
      \frac{\mgu(A,H_1)=\sigma\wedge\mgu(A',H_1)=\sigma'} 
        {\tuple{(A,\cB)_\delta^{H_1\leftarrow\cB_1}\midd S\sep
          (A',\cB')_\theta^{H_1\leftarrow\cB_1}\midd S'}
          \leadsto_\diamond \tuple{(\cB_1\sigma,\cB\sigma)_{\delta\sigma}\midd S
          \sep (\cB_1\sigma',\cB'\sigma')_{\theta\sigma'}\midd S'}}
        } \\ 

    \end{array}
    \]
  \rule{\linewidth}{1pt}
  \caption{Concolic execution semantics} \label{fig:concolic}
\end{figure}

In the concolic execution semantics, we perform both concrete and
symbolic execution steps in parallel. However, the symbolic execution
does not explore all possible execution paths but only mimics the
steps of the concrete execution; observe, e.g., rule \textsf{choice}
in Figure~\ref{fig:concolic}, where the clauses labeling the copies of
the symbolic goal are the same clauses $\ol{c_n}$ matching the
concrete goal, rather than the set of clauses $\ol{d_k}$ (a superset
of $\ol{c_n}$).

\begin{example} \label{ex2} 
  Consider again the logic program of Example~\ref{ex1}, now with
  clause labels:
  \[
  \begin{array}{l@{~~~~~~}l@{~~~~~~}l}
    (\ell_1)~ p(s(a)). &
    (\ell_4)~ q(a).   &
    (\ell_6)~ r(a).  \\
  
    (\ell_2)~ p(s(X)) \leftarrow q(X). & (\ell_5)~ q(b). & (\ell_7)~ r(c). \\
    
    (\ell_3)~ p(f(X)) \leftarrow r(X). & \\
    \end{array}
    \]
    Given the initial goal $p(f(X))$, we have the following concolic
    execution:
    \[
    \begin{array}{l@{~}l@{~}llll}
      \tuple{p(f(X))_\id\sep p(N)_\id} & \leadsto^\mathsf{choice}_{c(\cL_1,\cL'_1)} &
      \tuple{p(f(X))_\id^{p(f(Y)) \leftarrow r(Y)}\sep
        p(N)_\id^{p(f(Y)) \leftarrow r(Y)}} \\
      & \leadsto^\mathsf{unfold}_\diamond & \tuple{r(X)_\id\sep r(Y)_{\{N/f(Y)\}}} \\
      & \leadsto^\mathsf{choice}_{c(\cL_2,\cL'_2)} & \tuple{r(X)_\id^{r(a)}\midd
        r(X)_\id^{r(c)}\sep r(Y)_{\{N/f(Y)\}}^{r(a)}\midd
        r(Y)_{\{N/f(Y)\}}^{r(c)}} \\
      & \leadsto^\mathsf{unfold}_\diamond & \tuple{\mathsf{true}_{\{X/a\}}\midd
        r(X)_\id^{r(c)} \sep \mathsf{true}_{\{N/f(a)\}}\midd r(Y)_{\{N/f(Y)\}}^{r(c)}} \\
      & \leadsto^\mathsf{success}_\diamond & \tuple{\success_{\{X/a\}}\sep \success_{\{N/f(a)\}}} 
    \end{array}
    \]
    where $\cL_1 = \{\ell_3\}$, $\cL'_1=\{\ell_1,\ell_2,\ell_3\}$, and
    $\cL_2 = \cL'_2 = \{\ell_6,\ell_7\}$.
\end{example}
In this paper, we only consider finite concolic executions for initial
goals.  This is a reasonable assumption since one can expect concrete
goals to compute the first answer finitely (unless the program is
erroneous).
We associate a \emph{trace} to each concolic execution as follows:

\begin{definition}[trace]
  Let $P$ be a program and $\cC_0$ an initial concolic state. Let $E =
  (\cC_0 \leadsto_{l_1} \ldots \leadsto_{l_m} \cC_m)$, $m>0$, be a
  concolic execution for $\cC_0$ in $P$. Let
  $c(\cL_1,\cL'_1),\ldots,c(\cL_k,\cL'_k)$, $k\sleq m$, be the
  sequence of labels in $l_1,\ldots,l_m$ which are different from
  $\diamond$. Then, the trace associated to the concolic execution $E$
  is $\trace(E) = \cL_1,\ldots,\cL_k$.
\end{definition}
Roughly speaking, a trace is just a sequence with the sets of labels
of the matching clauses in each choice step.
For instance, the trace associated to the concolic execution of
Example~\ref{ex2} is $(\{\ell_3\},\{\ell_6,\ell_7\})$, i.e., we have
two unfolding steps with matching clauses $\{\ell_3\}$ and
$\{\ell_6,\ell_7\}$, respectively.
Note that traces ending with $\{\:\}$ represent failing derivations.

The following result states an essential invariant for concolic execution:

\begin{theorem} \label{th:invariant} Let $P$ be a program and $\cC_0 =
  \tuple{p(\ol{t_n})_\id \sep p(\ol{X_n})_\id }$ be an initial
  concolic state. Let $\cC_0 \leadsto \ldots \leadsto \cC_m$, $m\sgeq 0$,
  be a finite (possibly incomplete) concolic execution for $\cC_0$ in
  $P$. Then, for all concolic states $C_i= \tuple{\cB^c_\delta\midd
    S\sep \cD^{c'}_\theta\midd S'}$, $i=0,\ldots,m$, the following
  invariant holds: $|S|=|S'|$, $\cD\sleq \cB$, $c=c'$ (if any), and
  $p(\ol{X_n})\theta \sleq p(\ol{t_n})\delta$.
\end{theorem}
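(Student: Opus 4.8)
The plan is to prove the invariant by induction on the length $m$ of the concolic execution, establishing all four conjuncts simultaneously for every state $\cC_i$ along the derivation.

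\medskip

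\noindent\textbf{Base case.} For $m=0$ we have the single state $\cC_0=\tuple{p(\ol{t_n})_\id\sep p(\ol{X_n})_\id}$. First I would check each conjunct directly: the concrete and symbolic sequences both consist of a single goal, so the tails $S$ and $S'$ are empty and $|S|=|S'|=0$. Since $\ol{X_n}$ are fresh variables, $p(\ol{X_n})\sleq p(\ol{t_n})$ holds via the substitution $\{\ol{X_n}/\ol{t_n}\}$, giving both $\cD\sleq\cB$ and, with $\delta=\theta=\id$, the instantiation property $p(\ol{X_n})\theta=p(\ol{X_n})\sleq p(\ol{t_n})=p(\ol{t_n})\delta$. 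Neither leftmost goal carries a clause label yet, so the condition $c=c'$ is vacuously true.

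\medskip

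\noindent\textbf{Inductive step.} Assuming the invariant holds for $\cC_0\leadsto\ldots\leadsto\cC_{i}$, I would show it is preserved by the transition $\cC_i\leadsto\cC_{i+1}$, arguing by a case analysis on which of the five rules of Figure~\ref{fig:concolic} is applied. The rules \textsf{success}, \textsf{failure}, and \textsf{backtrack} are the easy cases: they either reduce to a final state or pop the common leftmost goal from both sequences in lockstep, so equal lengths and the shared clause labels of the new leftmost goals are inherited directly from $\cC_i$. The two genuinely load-bearing cases are \textsf{choice} and \textsf{unfold}. For \textsf{choice}, the rule replaces the single leftmost concrete/symbolic pair by exactly $n$ labeled copies on both sides using the \emph{same} clauses $\ol{c_n}$, so $|S|=|S'|$ is preserved and each new pair of leftmost goals is labeled by the same clause $c_j$, securing $c=c'$; the relations $\cD\sleq\cB$ and the instantiation property carry over unchanged because \textsf{choice} does not alter the substitutions or the atoms themselves. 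Here is also where I would record the side remark that $\ell(\ol{c_n})\subseteq\ell(\ol{d_k})$: since $\cD\sleq\cB$ by hypothesis, any clause head unifying with the more-instantiated concrete atom $A$ must a fortiori unify with the symbolic atom $A'$, so $\clauses(A,\cP)\subseteq\clauses(A',\cP)$.

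\medskip

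\noindent I expect the \textsf{unfold} case to be the main obstacle, since it is the only rule that modifies the substitutions and the atoms, so the generalization ordering must be shown to be stable under the two mgu computations. The rule requires that the \emph{same} clause $H_1\leftarrow\cB_1$ (guaranteed by the $c=c'$ part of the induction hypothesis) unifies with both $A$ and $A'$, yielding $\sigma=\mgu(A,H_1)$ and $\sigma'=\mgu(A',H_1)$. The crux is to prove that $\cB_1\sigma'\sleq\cB_1\sigma$ and, more delicately, that $p(\ol{X_n})\theta\sigma'\sleq p(\ol{t_n})\delta\sigma$, i.e.\ that applying the symbolic mgu keeps the symbolic goal an instance of the concrete goal. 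The key technical lemma I would invoke is the standard property of most general unifiers: from $A'\sleq A$ (which is $\cD\sleq\cB$ restricted to the leftmost atoms) and the fact that $A\sigma=H_1\sigma$ while $A'\sigma'=H_1\sigma'$, one shows that the concrete unifier $\sigma$ factors through the symbolic unifier $\sigma'$ up to a substitution $\gamma$ relating the two; tracking $\gamma$ carefully then yields the instance relation for both the bodies $\cB_1,\cB$ and the accumulated answer substitutions. I would isolate this unifier-factorization argument as the single nontrivial computation and verify that the renamed-apart freshness of clause variables (guaranteed by $\clauses$) prevents variable clashes that could otherwise break the chain of generalizations.
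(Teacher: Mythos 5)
Your proposal is correct and follows essentially the same route as the paper's own proof: induction on the derivation, a case split on the applied rule with \textsf{success}/\textsf{failure}/\textsf{backtrack}/\textsf{choice\_fail} and \textsf{choice} dispatched by direct inheritance of the invariant, and \textsf{unfold} as the only substantive case, resolved by showing that the concrete mgu $\sigma$ factors through the symbolic mgu $\sigma'$ (the paper compresses this to ``it is easy to see that $A'\sigma'\sleq A\sigma$ and $\sigma'\sleq\sigma$ restricted to $\var(H_1)$,'' which is exactly your factorization via $\gamma$, relying as you note on the renamed-apart clause variables). The only cosmetic slip is counting five rules where Figure~\ref{fig:concolic} has six, but \textsf{choice\_fail} falls under your ``easy cases'' anyway.
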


%%%%%%%%%%%%%%%%%%%%%%%%%%%%%%%%%%%%%%%%%%%%%%%%%%%%%%%%%%%%%%%%%%%
\section{Concolic Testing} \label{testing}

In this section, we introduce a concolic testing procedure for logic
programs based on the concolic execution semantics of the previous
section.

\subsection{The Procedure}

As we have seen in Section~\ref{concolic}, the concolic execution
steps labeled with $c(\cL_1,\cL_2)$ give us a hint of (potential)
alternative execution paths. Consider, for instance, the concolic
execution of Example~\ref{ex2}. The first step is labeled with
$c(\{\ell_3\},\{\ell_1,\ell_2,\ell_3\})$. This means that the selected
atom in the concrete goal only matched clause $\ell_3$, while the
selected atom in the symbolic goal matched clauses $\ell_1$, $\ell_2$
and $\ell_3$. In principle, there are as many alternative execution
paths as elements in
$\cP(\{\ell_1,\ell_2,\ell_3\})\:\backslash\:\{\ell_3\}$; e.g.,
$\{\:\}$ denotes an execution path where the selected atom matches no
clause, $\{\ell_1\}$ another path in which the selected atom
\emph{only} matches clause $\ell_1$, $\{\ell_1,\ell_2,\ell_3\}$
another path where the selected atom matches all three clauses
$\ell_1$, $\ell_2$ and $\ell_3$, and so forth.

When aiming at full choice coverage, we need to
solve both unification and disunification problems. Consider, e.g.,
that $A$ is the selected atom in a goal, and that we want it to unify
with the head of clause $\ell_1$ but not with the heads of clauses
$\ell_2$ and $\ell_3$. For this purpose, we introduce the following
auxiliary function $\alt$, which also includes some groundness
requirements (see below). In the following, we let $\approx$ denote
the unifiability relation, i.e., given atoms $A,B$, $A\approx B$ holds
if $\mgu(A,B)\neq\fail$; correspondingly, $\neg(A\approx B)$ holds if
$\mgu(A,B)=\fail$.

\begin{definition}[$\alt$] \label{def:alt}
  Let $A$ be an atom and $\cL,\cL'$ be sets of clause labels. Let
  $\cV$ be a set of variables.  The function $\alt(A,\cL,\cL',\cV)$
  returns a substitution $\theta$ such that the following holds:
  \[
  A\theta\approx H_1\wedge\ldots\wedge A\theta\approx H_n\wedge
  \neg(A\theta\approx H_{n+1})\wedge\ldots\wedge \neg(A\theta\approx
  H_m)
  \wedge \cV\theta~\mbox{are ground}
  \]
  where $H_1,\ldots,H_n$ are the heads of the (renamed apart) clauses
  labeled by $\cL$ and $H_{n+1},\ldots,H_m$ are the heads of the
  (renamed apart) clauses labeled by $\cL'\backslash\cL$,
  respectively. 
  If such a substitution does not exist, then function $\alt$ returns
  $\fail$.
\end{definition}
When the considered signature is finite,\!\footnote{Full Prolog and
  infinite signatures like integers or real numbers are left as future
  work.}  the following semi-algorithm is trivially sound and complete
for solving the above unifiability problem: first, bind $A$ with terms
of depth $0$.\footnote{The depth $\depth(t)$ of a term $t$ is defined
  as usual: $\depth(t) = 0$ if $t$ is a variable or a constant symbol,
  and $\depth(f(t_1,\ldots,t_n)) =
  1+\mathsf{max}(\depth(t_1),\ldots,\depth(t_n))$, otherwise.} If the
condition above does not hold, then we try with terms of depth $1$,
and check it again. We keep increasing the considered term depth until
a solution is found. If a solution exists, this naive semi-algorithm will
find it (otherwise, it may run forever). 
In practice, however, it may be very inefficient.

Observe that, in general, there might be several most general solutions to
the above problem. Consider, e.g., $A=p(X,Y)$,
$\Hpos=\{p(Z,Z),p(a,b)\}$ and $\Hneg=\{p(c,c)\}$. Then, both $p(a,U)$
and $p(U,b)$ are most general solutions.  In principle, any of them is
equally good in our context.
We postpone to the next section the introduction of a constructive
algorithm for function $\alt$.  Here, we present an algorithm to
systematically produce concrete initial goals so that all
\emph{feasible} choices in the execution paths are covered (unless the
process runs forever).
First, we introduce the following auxiliary definitions:

\begin{definition}[$\conc$, $\symb$]
\label{defn:conc:symb}
  Let $\cC =
  \tuple{\cB^1_{\delta_1}\midd\ldots\midd\cB^n_{\delta_n}\sep
    \cD^1_{\theta_1}\midd\ldots\midd\cD^n_{\theta_n}}$ be a concolic
  state. Then, we let $\conc(\cC) = \cB^1_{\delta_1}$ denote the
  first concrete goal and $\symb(\cC) = \cD^1_{\theta_1}$ the
  first symbolic goal.
\end{definition}

\begin{definition}[$\alttrace$] \label{def:alttrace}
  Let $P$ be a program, $\cC_0$ an initial concolic state, and $E =
  (\cC_0 \leadsto_{l_1} \ldots \leadsto_{l_n} \cC_n
  \leadsto_{c(\cL,\cL')} \cC_{n+1})$ be a (possibly incomplete)
  concolic execution for $\cC_0$ in $P$.  Then, the function
  $\alttrace$ denotes the following set of (potentially) alternative
  traces:
  \[
  \begin{array}{lll}
  \alttrace(E) = \{ \cL_1,\ldots,\cL_k,\cL'' \mid & \trace(\cC_0
  \leadsto_{l_1} \ldots \leadsto_{l_n} \cC_n) =
  \cL_1,\ldots,\cL_k & \\
  & \mbox{and}~\cL''\in(\cP(\cL')\:\backslash\:\cL) & \}
  \end{array}
  \]
\end{definition}
For instance, given the following (partial) concolic execution $E$
from Example~\ref{ex2}:
\[\hspace{-2ex}
\begin{array}{lll}
\tuple{p(f(X))_\id\sep p(N)_\id} & \leadsto^\mathsf{choice}_{c(\cL_1,\cL'_1)} 
      & \tuple{p(f(X))_\id^{p(f(Y)) \leftarrow r(Y)}\sep
        p(N)_\id^{p(f(Y)) \leftarrow r(Y)}} \\
      & \leadsto^\mathsf{unfold}_\diamond & \tuple{r(X)_\id\sep
        r(Y)_{\{N/f(Y)\}}} \\ 
      & \leadsto^\mathsf{choice}_{c(\cL_2,\cL'_2)} & \tuple{r(X)_\id^{r(a)}\midd
        r(X)_\id^{r(c)}\sep r(Y)_{\{N/f(Y)\}}^{r(a)}\midd
        r(Y)_{\{N/f(Y)\}}^{r(c)}}
    \end{array}
\]
where $\cL_1=\{\ell_3\}$, $\cL'_1=\{\ell_1,\ell_2,\ell_3\}$,
$\cL_2=\cL'_2 = \{\ell_6,\ell_7\}$, we have $\trace(E)=\cL_1,\cL_2$,
$\cP(\cL'_2)\:\backslash\:\cL_2=\{ \{\:\},\{\ell_6\},\{\ell_7\} \}$, and
$\alttrace(E) =\{(\cL_1,\{\:\}),(\cL_1,\{\ell_6\}),(\cL_1,\{\ell_7\})\}$.

Now, we introduce our concolic testing procedure. It takes as input a
program and a random |e.g., provided by the user| initial atomic goal
rooted by the distinguished predicate $main/n$. In the following, we assume
that each concrete initial goal $main(\ol{t_n})$ is existentially
terminating w.r.t.\ Prolog's leftmost computation rule, i.e., either
computes the \emph{first} answer in a finite number of steps or
finitely fails \cite{Vasak86a}.
For this purpose, we assume that $main/n$ has some associated
\emph{input} arguments, determined by a function $\inp$, so that an
initial goal $main(\ol{t_n})$ existentially terminates if the terms
$\inp(main(\ol{t_n}))$ are ground. One could also consider that there
are several combinations of input arguments that guarantee existential
termination |this is similar to the modes of a predicate| but we only
consider one set of input arguments for simplicity (extending the
concolic testing algorithm would be straightforward).
As mentioned before, assuming that concrete initial goals are
existentially terminating is a reasonable assumption in practice.

\begin{definition}[concolic testing] \label{def:concolic}
  \begin{description}
  \item[\textbf{Input:}] a logic program $P$ and an atom
    $main(\ol{t_n})$ with $\inp(main(\ol{t_n}))$ ground.

  \item[\textbf{Output:}] a set $TC$ of test cases.
    
  \end{description}

  \begin{enumerate}
  \item Let $\mathit{Pending} :=\{main(\ol{t_n})\}$, $TC:=\{\}$,
    $\traces:=\{\}$.

  \item While $|Pending|\neq 0$ do
    \begin{enumerate}
    \item Take $A\in \mathit{Pending}$, $\mathit{Pending} :=
      \mathit{Pending} \backslash \{A\}$, $TC := TC\cup \{A\}$.

    \item 
      Let $\cC_0 = \tuple{A_\id\sep main(\ol{X_n})_\id}$ and 
      compute a successful or finitely failing derivation
      $
      E = (\cC_0 \leadsto_{l_1} \ldots \leadsto_{l_m} \cC_m)
      $. 
      
    \item Let $\traces :=
      \traces\cup\{\trace(E)\}$. %%Report errors if any.

    \item We update $\mathit{Pending}$ %and $\traces$ 
      as follows:
      \begin{itemize}
      \item for each prefix $ \cC_0 \leadsto_{l_1} \ldots
        \leadsto_{l_j} \cC_j \leadsto_{c(\cL,\cL')} \cC_{j+1} $ of
        $E$ and
      \item for each (possibly partial) trace $\ol{\cL_{k}},\cL_{k+1}$
        $\in\alttrace(\cC_0 \leadsto_{l_1} \ldots \leadsto_{l_j} \cC_j
        \leadsto_{c(\cL,\cL')} \cC_{j+1})$ which is not the prefix of
        any trace in $\traces$, 
      \item add 
        $main(\ol{X_n})\theta\theta'$ to $\mathit{Pending}$ %for every
        if $\alt(A_1,\cL_{k+1},\cL',G)=\theta'\neq\fail$, where
        $G=\var(\inp(main(\ol{X_n})\theta))$ and
        $\symb(\cC_j)= (A_1,\cB)_\theta$.\!\footnote{I.e., $A_1$ is
          the first atom of the symbolic goal $\symb(\cC_j)$ of the
          concolic state $\cC_j$, see Definition
          \ref{defn:conc:symb}. }
      \end{itemize}
    \end{enumerate}
  \item Return the set $TC$ of test cases
  \end{enumerate}
\end{definition}
The soundness of concolic testing is immediate, since 
each atom from $TC$ is indeed a test case 
of the form $main(\ol{s_n})$ with $\inp(main(\ol{s_n}))$ ground. 
Completeness and termination are more subtle properties though.

In principle, one could argue that the concolic testing algorithm is
a complete semi-algorithm in the sense that, if it terminates, the
generated test cases cover all feasible paths. Our assumptions
trivially guarantee that every considered concrete execution is finite
(i.e., step (2b) in the loop of the concolic testing
algorithm). Unfortunately, the algorithm will often run forever by
producing infinitely many test cases. Consider, e.g., the following
simple program:
\[
(\ell_1) ~nat(0).\hspace{10ex} (\ell_2)~nat(s(X)) \leftarrow nat(X).
\]
Even if every goal $nat(t)$ with $t$ ground is terminating, our
algorithm will still produce infinitely many test cases, e.g.,
$nat(0)$, $nat(s(0))$, $nat(s(s(0)))$, \ldots, since each goal will
explore a different path (i.e., will produce a different execution
trace: $(\{\ell_1\})$, $(\{\ell_2\},\{\ell_1\})$,
$(\{\ell_2\},\{\ell_2\},\{\ell_1\})$, etc). In practice, though, the
quality of the generated test cases should be experimentally evaluated
using a coverage tool.

Therefore, in general, we will sacrifice completeness in order to
guarantee the termination of concolic testing. For this purpose, one
can use a time limit, a bound for the length of concolic executions,
or a maximum term depth for the arguments of the generated test
cases. In this paper, we consider the last approach.
Then, one can replace the use of a particular function $\alt$ in step
(2d) of Definition~\ref{def:concolic} by a function $\alt_k$ with
$\alt_k(A,\cL,\cL',G)=\alt(A,\cL,\cL',G)=\theta$ if $depth(t)\sleq k$
for all $X/t\in\theta$, and $\alt_k(A,\cL,\cL',G) =\fail$ otherwise.
This is the solution we implemented in the concolic testing tool
described in Section~\ref{tool}.

For instance, by requiring a maximum term depth of $1$, the
generated test cases for the program $nat$ above would be $nat(0)$,
$nat(1)$, $nat(s(0))$ and $nat(s(1))$, where $1$ is a fresh constant
symbol, with associated traces $(\{\ell_1\})$, $(\{\:\})$,
$(\{\ell_2\},\{\ell_1\})$, and $(\{\ell_2\},\{\:\})$, respectively.

Termination of the algorithm in Definition~\ref{def:concolic} is then
guaranteed since only a finite number of new atoms can be added in
step (2d) |up to variable renaming| and, moreover, only those (possibly
partial) traces which are not a prefix of any trace already in the set
$\mathit{Traces}$ are considered. Observe that these facts suffice to
ensure termination of the algorithm since one cannot have infinitely
many traces with a finite number of atoms.

%%%%%%%%%%%%%%%%%%%%%%%%%%%%%%%%%%%%%%%%%%%%%%%%%%%%%%%%%%%%%%%%%%%%%
\subsection{Solving Unifiability Problems} \label{sec:unif}

In this section, we present a constructive algorithm for function $\alt$.
Let us first reformulate our unification problem in slightly more
general terms than in Definition~\ref{def:alt}. Let $A$ be an atom and
$\Hpos$, $\Hneg$ be two sets of atoms the elements of which are
variable disjoint with $A$ and unify with $A$, and a set of variables
$G$. The problem consists in finding a substitution $\sigma$ such that
\[
\forall H^+\in\Hpos.~
A\sigma\approx H^+ \wedge 
\forall H^-\in\Hneg.~
\neg (A\sigma\approx H^-),~\mbox{and}~ G\sigma ~\mbox{is ground}\hspace{2ex}(*)
\]
We introduce a stepwise method that, roughly speaking, proceeds as
follows:
\begin{itemize}
\item First, we produce some ``maximal'' substitutions $\theta$
  (called \emph{maximal unifying substitution} below) for $A$ such
  that $A\theta$ still unifies with the atoms in $\Hpos$. Here, we use
  a special set $\cU$ of fresh variables with
  $\var(\{A\}\cup\Hpos\cup\Hneg)\cap\cU=\emptyset$. The elements of
  $\cU$ are denoted by $U$, $U'$, $U_1$\dots{} Then, in $\theta$, the
  variables from $\cU$ (if any) denote positions where further binding
  will prevent $A\theta$ from unifying with some atom in $\Hpos$. In
  contrast, $A\theta\sigma'$ still unifies with all the atoms in
  $\Hpos$ as long as $\sigma'$ does not bind any variable from $\cU$.
  Roughly speaking, we apply some (minimal) generalizations to the
  atoms in $\Hpos$ so that they unify, and then return their most
  general unifier.

  For this stage, we use well known techniques like \emph{variable
    elimination} \cite{MM82}
  and \emph{generalization} (from the algorithm for most specific
  generalization \cite{Plo70}); see Definition~\ref{alg1} below.

\item In a second stage, we look for another substitution $\eta$
  such that $\theta\eta$ is a solution for $(*)$. Here, we
  basically follow a generate and test algorithm (as in the naive
  algorithm above), but it is now much more restricted thanks to
  $\theta$.
\end{itemize}

%%%%%%%%%%%%%%%%%%%%%%%%%%%%%%%%%%%%%
\subsubsection{The Positive Atoms}
\label{section-algo-all-pos}
%%%%%%%%%%%%%%%%%%%%%%%%%%%%%%%%%%%%%

Here, we will use the variables from the special set $\cU$ to
replace \emph{disagreement pairs}
(see~\cite{Apt97} p.~27).
Roughly speaking, given terms $s$ and $t$, a subterm $s'$ of $s$ and a
subterm $t'$ of $t$ form a disagreement pair if the root symbols of
$s'$ and $t'$ are different, but the symbols from $s'$ up to the root of
$s$ and from $t'$ up to the root of $t$ are the same. For instance,
$X,g(a)$ and $b,h(Y)$ are disagreement pairs of the terms $f(X,g(b))$
and $f(g(a),g(h(Y)))$.
A disagreement pair $t,t'$ is called \emph{simple}
if one of the terms is a variable that does not occur in the other
term and no variable of $\cU$ occurs in $t,t'$.  We say that
the substitution $\{X/s\}$ is determined by $t,t'$ if
$\{X,s\}=\{t,t'\}$.

Basically, given an atom $A$ and a set of atoms $\Hpos$, the following
algorithm nondeterministically computes a substitution $\theta$ such
that $A\theta\sigma'$ still unifies with all the atoms in $\Hpos$ as
long as $\sigma'$ does not bind any variable from $\cU$.

\begin{definition}[maximal unifying substitution] \label{alg1}
\begin{description}
\item[\textbf{Input:}] 
  an atom $A$ and a set of atoms $\Hpos$ such that
  $\var(\{A\}\cup\Hpos)\cap\cU=\emptyset$ and $A\approx B$ for all $B\in\Hpos$.
\item[\textbf{Output:}] a substitution $\theta$.
\end{description}

\begin{enumerate}
\item \label{algo-msa-init}
  Let $\cB:=\{A\}\cup\Hpos$.
\item \label{algo-msa-while-simple}
  While simple disagreement pairs occur in $\cB$ do
  \begin{enumerate}
  \item nondeterministically choose a simple disagreement pair $X,t$
    (resp.\ $t,X$) in $\cB$ such that there is no other simple
    disagreement pair of the form $X,t'$ (or $t',X$) with $t<t'$
    (i.e., a strict instance);
  \item \label{algo-msa-simple-pair}
    set $\cB$ to $\cB\eta$ where $\eta = \{X/t\}$.
  \end{enumerate}
\item \label{algo-msa-while-not-simple}
  While $|\cB|\neq 1$ do
  \begin{enumerate}
  \item nondeterministically choose a 
    disagreement pair $t,t'$ in $\cB$;
  \item \label{algo-msa-not-simple-pair} replace all disagrement
    pairs $t,t'$ in $\cB$ by a fresh variable of $\cU$.
  \end{enumerate}
\item \label{algo-msa-return} 
  Return $\theta$, where $\cB=\{B\}$, $A\theta = B$, and $\Dom(\theta)\subseteq\Var(A)$.
\end{enumerate}
\end{definition}
We note that the algorithm assumes that the input atom $A$ is always more
general than the final atom $B$ so that the last step is well defined. 
An invariant proving that this is indeed the case can be found in the
online appendix (Appendix B).

Observe that the step (2a) is nondeterministic since there may exist
several disagreement pairs $X,t$ (or $t,X$) for the same variable $X$.
Consider the atom $A = p(X,Y)$ and the set
$\Hpos=\{p(a,b),p(Z,Z)\}$. Then, both $\{X/a,Y/U\}$ and $\{X/U,Y/b\}$
are maximal unifying substitutions, as the following example
illustrates:

\begin{example}\label{ex-1-g}
  Let $A=p(X,Y)$ and $\Hpos=\{p(a,b),p(Z,Z)\}$, with
  $\cB:=\{p(X,Y),p(a,b),p(Z,Z)\}$. 
  The algorithm then considers
  the simple disagreement pairs in $\cB$. From $X,a$, we get
  $\eta_1:=\{X/a\}$ and the
  action~(\ref{algo-msa-simple-pair}) sets $\cB$ to
    $\cB\eta_1=\{p(a,Y),p(a,b),p(Z,Z)\}$.
   The substitution $\eta_2:=\{Y/b\}$ is determined by $Y,b$ and the
     action~(\ref{algo-msa-simple-pair}) sets $\cB$ to
     $\cB\eta_2=\{p(a,b),p(Z,Z)\}$.
   Now, we have two non-deterministic possibilities:
   \begin{itemize}
   \item If we consider the disagreement pair $a,Z$, we have a
     substitution $\eta_3:=\{Z/a\}$ and
     Action~(\ref{algo-msa-simple-pair}) then sets $\cB$ to
     $\cB\eta_3=\{p(a,b),p(a,a)\}$. Now, no simple disagreement pair
     occurs in $\cB$, hence the algorithm jumps to the loop at
     line~\ref{algo-msa-while-not-simple}.
     Action~(\ref{algo-msa-not-simple-pair}) replaces the disagreement
     pair $b,a$ with a fresh variable $U\in \cU$, hence $\cB$ is set
     to $\{p(a,U)\}$.  As $|\cB|=1$ the loop at
     line~\ref{algo-msa-while-not-simple} stops and the algorithm
     returns the substitution $\{X/a,Y/U\}$.
   \item If we consider the disagreement pair $b,Z$ instead, we have  a
     substitution $\eta'_3:=\{Z/b\}$, and
     Action~(\ref{algo-msa-simple-pair}) sets $\cB$ to
     $\cB\eta'_3=\{p(a,b),p(b,b)\}$. Now, by proceeding as in the
     previous case, the algorithm returns %the substitution
     $\{X/U,Y/b\}$.
   \end{itemize}
\end{example}

%%%%%%%%%%%%%%%%%%%%%%%%%%%%
\subsubsection{The Negative Atoms}
\label{section-algo-all-pos-neg}
%%%%%%%%%%%%%%%%%%%%%%%%%%%%

Now we deal with the negative atoms 
by means of the following
algorithm which is the basis of our implementation of function
$\alt$:

\begin{definition}[PosNeg] \label{alg2}
\begin{description}
\item[\textbf{Input:}] an atom $A$ and two sets of atoms
  $\Hpos$, $\Hneg$, the elements of which are variable
  disjoint with $A$ and unify with $A$, and
  a set of variables $G$. 
\item[\textbf{Output:}] $\fail$ or a substitution $\theta\eta$
  (restricted to the variables of $A$).
\end{description}

\begin{enumerate}\itemsep2pt
\item Let $\theta$ be the substitution returned by the algorithm
  of Definition~\ref{alg1}
  with input $A$ and $\Hpos$.
\item Let $\eta$  be an idempotent substitution such that
  $G\theta\eta$ is ground.
\item Check that 
  $\Dom(\eta)\subseteq\Var(A\theta)$ and
  $\Var(\eta)\cap \cU=\emptyset$,
  otherwise return $\fail$.
\item Check that for each $H^-\in\Hneg$, 
  $\neg (A\theta\eta\approx H^-)$,
  otherwise return $\fail$.
\item Return $\theta\eta$
  (restricted to the variables of $A$). 
\end{enumerate}
\end{definition}
The correctness of this algorithm is stated as follows:

\begin{theorem}\label{theorem:correction-algo-posneg}
  Let $A$ be an atom and $\Hpos,\Hneg$ be two sets of atoms such that
  $\var(\{A\}\cup\Hpos\cup\Hneg)\cap\cU=\emptyset$ and $A\approx B$
  for all $B\in\Hpos\cup\Hneg$, 
  and a set of variables $G$.
  The algorithm in Definition~\ref{alg2} always terminates and, if it
  returns a substitution $\sigma$, then $\bigwedge_{H\in\Hpos}
  A\sigma\approx H\wedge \bigwedge_{H'\in\Hneg} \neg(A\sigma\approx
  H')$ holds and $G\sigma$ is ground.
\end{theorem}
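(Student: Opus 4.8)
The plan is to prove the theorem by verifying each of the three claims—termination, the positive/negative unification conditions, and the groundness of $G\sigma$—in turn, relying on the correctness invariant for the algorithm of Definition~\ref{alg1} that is stated to reside in the online appendix.

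\textbf{Termination.} First I would observe that termination is essentially immediate once we grant that the algorithm of Definition~\ref{alg1} terminates. Step~1 is a single call to that subalgorithm. Step~2 asks for an idempotent $\eta$ making $G\theta$ ground; since $G$ is a finite set of variables, such an $\eta$ can be produced in finitely many steps (e.g.\ by binding each remaining variable of $G\theta$ to a ground term). Steps~3 and~4 perform finitely many checks: $\Dom(\eta)\subseteq\Var(A\theta)$, $\Var(\eta)\cap\cU=\emptyset$, and, for each of the finitely many $H^-\in\Hneg$, a single unifiability test $\neg(A\theta\eta\approx H^-)$, each of which is decidable and terminating. Hence the whole procedure halts.

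\textbf{The positive conditions.} Next I would invoke the key property established for Definition~\ref{alg1}: the returned $\theta$ is a \emph{maximal unifying substitution}, meaning that $A\theta\sigma'$ still unifies with every $H\in\Hpos$ as long as $\sigma'$ binds no variable of $\cU$. The crucial step is therefore to check that $\eta$ plays the role of such a $\sigma'$. This is exactly what the guard in step~3 enforces: $\Var(\eta)\cap\cU=\emptyset$ guarantees $\eta$ touches no variable of $\cU$, and $\Dom(\eta)\subseteq\Var(A\theta)$ keeps $\eta$ acting on $A\theta$. Consequently $A\theta\eta\approx H$ for all $H\in\Hpos$, which is the positive half of the conjunction. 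Restricting $\theta\eta$ to the variables of $A$ does not affect unifiability of $A(\theta\eta)$ with the (variable-disjoint) atoms of $\Hpos$, so the condition survives the restriction in step~5.

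\textbf{The negative conditions and groundness.} Finally, the negative half $\neg(A\sigma\approx H')$ for all $H'\in\Hneg$ is exactly the property that step~4 verifies explicitly before returning, so if the algorithm returns a substitution at all, these disunifications hold by construction; and again the restriction to $\Var(A)$ is harmless because the $H'$ are variable-disjoint from $A$. Groundness of $G\sigma$ follows because step~2 chose $\eta$ so that $G\theta\eta$ is ground, and since $G\subseteq\Var(A)$ (the groundness variables come from the input arguments of the goal, which are among $\Var(A)$) the restriction of $\theta\eta$ to $\Var(A)$ leaves the bindings of the variables in $G$ intact, whence $G\sigma$ remains ground. I would expect the main obstacle to be a careful justification that restricting $\theta\eta$ to $\Var(A)$ preserves all three properties simultaneously—this requires the variable-disjointness hypotheses on $\Hpos$, $\Hneg$, and the freshness of $\cU$ to ensure that no binding relevant to unifiability or groundness is discarded—so I would spell out that argument in most detail, treating the rest as routine consequences of the stated invariant.
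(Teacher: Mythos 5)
Your proof is correct and follows essentially the same route as the paper's: termination is immediate step by step, the positive conditions follow from the correctness theorem for the maximal-unifying-substitution algorithm of Definition~\ref{alg1} (Theorem~\ref{theorem:correction-algo-pos} in the appendix, whose hypotheses on $\eta$ — idempotence, $\Dom(\eta)\subseteq\Var(A\theta)$, $\Var(\eta)\cap\cU=\emptyset$ — are exactly what steps~2 and~3 enforce), the negative conditions are checked explicitly in step~4, and groundness of $G\sigma$ holds by the construction of $\eta$ in step~2. Your extra care about the restriction to $\Var(A)$ in step~5 goes beyond the paper's one-line "by construction", though note that your justification of groundness quietly assumes $G\subseteq\Var(A)$, which is not among the theorem's hypotheses (it does hold in the way $\alt$ is invoked by the concolic testing procedure).
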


\begin{example}
  Let $A:=p(X)$, $\Hpos:=\{p(s(Y))\}$, $\Hneg:=\{p(s(0))\}$,  
  and $G:=\{X\}$.
  The algorithm of Definition~\ref{alg1} 
  returns $\theta = \{X/s(Y)\}$. 
  We take $\eta=\{Y/s(0)\}$, it is idempotent and  
  $G\theta\eta$ is ground.
  $\Dom(\eta)\subseteq \Var(A\theta)$ 
  and $\Var(\eta)=\{Y\}$ does not intersect with $\cU$.
  Finally, $A\theta\eta=p(s(s(0)))$ does not unify with $p(s(0))$.
  The algorithm thus returns $\theta\eta=\{X/s(s(0)),Y/s(0)\}$
  restricted to the variables of $A$, i.e., 
  $\{X/s(s(0))\}$.
\end{example}

\begin{example}
  Let $A:=p(X)$, $\Hpos:=\{p(a),p(b)\}$, 
  $\Hneg:=\{p(f(Z))\}$, and $G:=\emptyset$. 
  The algorithm of Definition~\ref{alg1}
  applied to $A$ and $\Hpos$ 
  returns $\theta=\{X/U\}$. 
  However, we cannot find $\eta$ such that $A\theta\eta$ does not
  unify with $p(f(Z))$ without binding $U$. 
  The algorithm thus returns $\fail$.  
\end{example}
Theorem~\ref{theorem:correction-algo-posneg} states the soundness of our
procedure for computing function $\alt$.
As for completeness, we claim that binding an atom $A$ with all
possible maximal unifying substitutions for $A$ and $\Hpos$ does not
affect to the existence of a solution to the unification problem (*)
above
(see the online appendix (Appendix B) for more details).

%%%%%%%%%%%%%%%%%%%%%%%%%%%%%%%%%%%%%%%%%%%%%%%%%%%%%%%%%%%%%%%%%%%%%%%
\subsection{A Tool for Concolic Testing} \label{tool}

In this section, we present a prototype implementation of the
concolic testing scheme. The tool, called \textsf{contest}, is
publicly available from the following URL
\begin{quote}
  \texttt{http://kaz.dsic.upv.es/contest.html}
\end{quote}
It consists of approx.\ 1000 lines of Prolog code and implements the
concolic testing algorithm of Definition~\ref{def:concolic} with
function $\alt$ as described in Section~\ref{sec:unif} and a maximum
term depth that can be fixed by the user in order to guarantee the
termination of the process. 
Moreover, we also introduced a bound for the number of alternatives
when computing function $\alttrace$ (see
Definition~\ref{def:alttrace}). Roughly speaking, when the number of
alternatives is too high, we give up aiming at full \emph{choice}
coverage and return sets with only one clause label (which suffice for
clause coverage).

\begin{table}[t]
  \caption{Clause coverage analysis results (SICStus Prolog)}   \label{table:coverage}
  \centering
  \footnotesize
  $
  \begin{array}{l@{~}r|l@{~}r|l@{~}r|l@{~}r|l@{~}r}
    \hline\hline
    % \mathsf{program} & \mathsf{\%} & \mathsf{program} &
    % \mathsf{\%} & \mathsf{program} &
    % \mathsf{\%} & \mathsf{program} &
    % \mathsf{\%}& \mathsf{program} &
    % \mathsf{\%} \\\hline
   paper & 100\% &
   paper2 & 100\% &
   nat & 100\% &
   advisor & 100\% &
   applast & 100\% \\ %\hline
   depth & 88\% &
   regexp & 86\% &
   relative & 100\% &
   rotateprune & 100\% &
   transpose & 100\% \\ %\hline
   mult & 100\% &
   hanoi & 100\% &
   automaton & 100\% &
   qsort & 95\% &
   inclist & 100\% \\ %\hline
   doubleflip & 100\% &
   recacctype & 100\% &
   ackermann & 100\% &
   fibonacci & 100\% &
   preorder & 100\% \\\hline\hline
  \end{array}
  $
\end{table}

Table~\ref{table:coverage} shows a summary of the coverage achieved by
the test cases automatically generated using \textsf{contest}. The
complete benchmarks --including the source code, initial goal, input
arguments and maximum term depth-- 
can be found in the above URL.
We used the coverage analysis tool of SICStus Prolog 4.3.1, which
basically measures the number of times each clause is used. The
results are very satisfactory, achieving a full coverage in most of
the examples.

The current version is a proof-of-concept implementation and only
deals with pure Prolog without negation. We plan to extend it to cope
with full Prolog. The concrete semantics can be extended following
\cite{SESGF11}, and concolic execution is in general a natural
extension of the semantics in Figure~\ref{fig:concolic}. 
For relational built-in's or
equalities, we should label the execution step with an associated
constraint, which can then be used to produce alternative execution
paths by solving its negation. In this context, our tool will be
useful not only for test case generation, but also to detect program
errors during concolic testing (e.g., negated atoms which are not
instantiated enough, incorrect calls to arithmetic built-in's, etc).
See the online appendix (Appendix A) for more details on extending
concolic execution to full Prolog.

%%%%%%%%%%%%%%%%%%%%%%%%%%%%%%%%%%%%%%%%%%%%%%%%%%%%%%%%%%%%%%%%%%%%%%%
\section{Related Work and Concluding Remarks} \label{relwork}

\citeN{DBLP:conf/iclp/MeraLH09} 
present a framework unifying unit testing and 
run-time verification for the 
Ciao system~\cite{DBLP:journals/tplp/HermenegildoBCLMMP12}.
The $\mathrm{ECL}^i\mathrm{PS}^e$ constraint programming 
system~\cite{DBLP:journals/tplp/SchimpfS12}
and SICStus Prolog~\cite{DBLP:journals/tplp/CarlssonM12}
both provide  tools which run a given goal and compute
how often program points in the code were executed.
SWI-Prolog~\cite{wielemaker:2011:tplp} offers a unit testing tool
associated to an optional interactive generation of test cases.
It also includes an experimental coverage analysis which runs a given goal
and computes the percentage of the used clauses and failing clauses.
\citeN{DBLP:conf/issta/BelliJ93} and \citeN{DBLP:conf/lopstr/DegraveSV08}
consider automatic generation of test inputs for  strongly typed and moded logic
programming languages like the Mercury programming 
language~\cite{Somogyi96a}, whereas we only require moding the top-level predicate of the program.

One of the closest approaches to our work is the test case generation
technique by \cite{AAGR14}. The main difference, though, is that their
technique is based solely on traditional symbolic execution. As
mentioned before, concolic testing may scale better since one can deal
with more complex constraints by using data from the concrete
component of the concolic state.  
Another difference is that we aim at full path coverage (i.e., choice
coverage), and not only a form of statement coverage.

Another close approach is \cite{Vid15}, where a concolic execution
semantics for logic programs is presented. However, this approach only
considers a simpler statement coverage and, thus, it can be seen as a
particular instance of the technique in the present paper. Another
significant difference is that, in \cite{Vid15}, concolic execution
proceeds in a stepwise manner: first, concrete execution produces an
execution \emph{trace}, which is then used to drive concolic
execution. Although this scheme is conceptually simpler, it may give
rise to poorer results in practice since one cannot use concrete
values in symbolic executions, one of the main advantages of concolic
execution over traditional symbolic execution.  Moreover,
\citeN{Vid15} presents no formal results nor an implementation of the
concolic execution technique.

Summarizing the paper, we have introduced a novel scheme for concolic testing in logic
programming. It offers a sound and fully automatic technique for
test case generation with a good code coverage. 
We have stated 
the particular type of unification problems that should be solved to produce new test cases. 
We have proposed a correct algorithm for such unification problems.
Furthermore, we have developed a publicly available
proof-of-concept implementation of the concolic testing scheme,
\textsf{contest}, that shows the usefulness of our approach.
To the best of our knowledge, this is the first fully automatic
testing tool for Prolog that aims at full path coverage (here called choice coverage).

As future work, we plan to extend the scheme to full Prolog (see the
remarks in Section~\ref{tool}).
Another interesting subject for further research is the definition of
appropriate heuristics to drive concolic testing w.r.t.\ a given
coverage criterion. This might have a significant impact on the
quality of the test cases when the process is incomplete.
Finally, from the experimental evaluation, we observed that the
results could be improved by introducing type information, so that the
generated values are restricted to the right type. Hence, improving concolic
testing with type annotations is also a promising line of future work.

%%%%%%%%%%%%%%%%%%%%%%%%%%%%%%%%%%%%%%%%%%%%%%%%%%%%%%%%%%%%%%%%%%%%%%%

%\bibliographystyle{acmtrans}
%\bibliography{/Users/gvidal/Dropbox/biblio} 
%\bibliography{biblio} 

%%%%%%%%%%%%%%%%%%%%%%%%%%%%%%%%%%%%%%%%%%%%%%%%%%%%%%%%%%%%%%%%%%%%%%%%%%%

\clearpage

\begin{appendix}

\pagestyle{plain}

\mbox{}\\[1ex]

  \begin{center}
    {\large\textnormal{Online appendix for the paper}}   \\
    \LARGE\em Concolic Testing in Logic Programming\\
    {\large\textnormal{published in Theory and Practice of Logic Programming}}
  \end{center}

\vspace{1ex}

  \begin{center}
    FRED MESNARD, \'ETIENNE PAYET \\
\em LIM - Universit\'e de la R\'eunion, France\\
\email{\{fred,epayet\}@univ-reunion.fr} \\[1ex]
GERM\'AN VIDAL \\
\em MiST, DSIC, Universitat Polit\`ecnica de Val\`encia\\
\email{gvidal@dsic.upv.es}
  \end{center}

\vspace{.5ex}

  \begin{center}
    \emph{ submitted 29 April 2015; revised 3rd July 2015; accepted 14
      July 2015 } 
  \end{center}

\vspace{1.5ex}

In this appendix we report, for the sake of completeness, some
auxiliary contents that, for space limitations, we could not include
in the paper. 

%%%%%%%%%%%%%%%%%%%%%%%%%%%%%%%%%%%%%%%%%
\section{Towards Extending Concolic Testing to Full Prolog} \label{fullsemantics}

\begin{figure}[b]
  \rule{\linewidth}{1pt}
  \[
  \hspace{-3ex}\begin{array}{r@{~}l}
    \mathsf{(success)} & {\displaystyle 
      \frac{~} 
        {\tuple{\mathsf{true}_\delta\midd S} \to \tuple{\success_\delta}}
        } 
        \\[2ex]

     \mathsf{(failure)} & {\displaystyle 
      \frac{~} 
        {\tuple{(\fail,\cB)_\delta} \to \tuple{\failsc_\delta}}
        }
        \hfill\hspace{5ex}
    \mathsf{(backtrack)} ~ {\displaystyle 
      \frac{S\neq\epsilon} 
        {\tuple{(\fail,\cB)\midd S} \to \tuple{S}}
        }\\[3ex]

     \textsf{(choice)} &  {\displaystyle 
      \frac{\defined(A) \wedge
        \clauses(A,\cP) = (c_1,\ldots,c_n)\wedge n>0 \wedge m~\mbox{is fresh}} 
        {\tuple{(A,\cB)_\delta\midd S} \to
          \tuple{(A,\cB)_\delta^{c_1[!/!^m]} \midd \ldots\midd
          (A,\cB)_\delta^{c_n[!/!^m]}\midd ~?^m_\delta\midd  S}}
        }\\[3ex]

    \mathsf{(choice\_fail)} & {\displaystyle 
      \frac{\defined(A,\cP)\wedge \clauses(A,\cP)=\emptyset} 
        {\tuple{(A,\cB)_\delta\midd S} \to \tuple{(\fail,\cB)_\delta\midd S}}
        }\\[3ex]

    \mathsf{(unfold)} & {\displaystyle 
      \frac{\mgu(A,H_1)=\sigma} 
        {\tuple{(A,\cB)_\delta^{H_1\leftarrow\cB_1}\midd S}
          \to \tuple{(\cB_1\sigma,\cB\sigma)_{\delta\sigma}\midd S}}
        } \\[3ex]

    \mathsf{(cut)} & {\displaystyle 
      \frac{~} 
        {\tuple{(!^m,\cB)_\delta\midd S'\midd ~?^m_{\delta'}\midd S}
          \to \tuple{\cB_{\delta}\midd~?^m_{\delta'}\midd S}}
        } %\\[3ex]
        \hfill\hspace{5ex}
    \mathsf{(cut\_fail)} ~ {\displaystyle 
      \frac{~} 
        {\tuple{?^m_\delta \midd S}
          \to \tuple{\fail_\delta \midd S}}
        } \\[3ex]

    \mathsf{(call)} & {\displaystyle 
      \frac{A\not\in\cV\wedge m~\mbox{is fresh}} 
        {\tuple{(call(A),\cB)_\delta\midd S}
          \to
          \tuple{(A[\cV/\call(\cV),!/!^m],\cB)_{\delta}\midd ~?^m_{\delta}\midd S}}
        } \\[3ex]

    \mathsf{(call\_error)} & {\displaystyle 
      \frac{A\in\cV} 
        {\tuple{(call(A),\cB)_\delta\midd S}
          \to
          \tuple{\errorsc_{\delta}}}
        } \\[3ex]

    \mathsf{(not)} & {\displaystyle 
      \frac{m~\mbox{is fresh}} 
        {\tuple{(\backslash\!\!+\!\!(A),\cB)_\delta\midd S}
          \to
          \tuple{(\call(A),!^m,\fail)_{\delta}\midd \cB_\delta\midd ~?^m_\delta\midd S}}
        } \\[3ex]

    \mathsf{(unify)} & {\displaystyle 
      \frac{\mgu(t_1,t_2)=\sigma\neq\fail} 
        {\tuple{(t_1=t_2,\cB)_\delta\midd S}
          \to
          \tuple{\cB\sigma_{\delta\sigma}\midd S}}
        } %%\\[3ex]
        \hfill\hspace{3ex}
    \mathsf{(unify\_fail)} ~ {\displaystyle 
      \frac{\mgu(t_1,t_2)=\fail} 
        {\tuple{(t_1=t_2,\cB)_\delta\midd S}
          \to
          \tuple{\fail_{\delta}\midd S}}
        } \\[3ex]

    \mathsf{(is)} & {\displaystyle 
      \frac{\mathsf{eval}(e_2) = t_2 \neq \error} 
        {\tuple{(t_1~\mathsf{is}~e_2,\cB)_\delta\midd S}
          \to
          \tuple{(t_1=t_2,\cB)_\delta\midd ~ S}}
        } %\\[3ex]
        \hfill\hspace{1ex}
    \mathsf{(is\_error)} ~ {\displaystyle 
      \frac{\mathsf{eval}(e_2) = \error} 
        {\tuple{(t_1~\mathsf{is}~e_2,\cB)_\delta\midd S}
          \to
          \tuple{\errorsc_\delta}}
        } \\[3ex]

    \mathsf{(rel)} & {\displaystyle 
      \frac{\mathsf{eval}(t_1\oplus t_2) = A \in\{\mathsf{true},\mathsf{fail}\}} 
        {\tuple{(t_1~{\oplus}~t_2,\cB)_\delta\midd S}
          \to
          \tuple{(A,\cB)_\delta\midd ~ S}}
        } %\\[3ex]
        \hfill\hspace{1ex}
    \mathsf{(rel\_error)} ~ {\displaystyle 
      \frac{\mathsf{eval}(t_1\oplus t_2) = \error} 
        {\tuple{(t_1\oplus t_2,\cB)_\delta\midd S}
          \to
          \tuple{\errorsc_\delta}}
        }
    \end{array}
    \]
  \rule{\linewidth}{1pt}
  \caption{Extended concrete semantics} 
  \label{fig:concrete3}
\end{figure}

In this section, we show a summary of our preliminary research on
extending concolic execution to deal with full Prolog. First, we
consider the extension of the concrete semantics. Here, we mostly
follow the linear semantics of 
%\cite{SESGF11}, 
(Str{\"o}der et al. 2011),
being the main
differences that we consider built-ins explicitly, we excluded dynamic
predicates for simplicity |but could be added along the lines of
%\cite{SESGF11}| 
(Str{\"o}der et al. 2011)|
and that, analogously to what we did in
Section~\ref{concrete}, only the first answer for the initial goal is
considered.

In the following, we let the Boolean function $\mathsf{defined}$
return true when its argument is an atom rooted by a defined predicate
symbol, and false otherwise (i.e., a built-in). Moreover, for evaluating
relational and arithmetic expressions, we assume a function
$\mathsf{eval}$ such that, given an expression $e$, $\mathsf{eval}(e)$
either returns the evaluation of $e$ (typically a number or a Boolean
value) or the special constant $\error$ when the expression is not
instantiated enough to be evaluated. E.g., $\eval(2+2) = 4$,
$\eval(3>1) = \mathit{true}$, but $\eval(X>0) = \error$.

The transitions rules are shown in Figure~\ref{fig:concrete3}. In the
following, we briefly explain the novelties w.r.t.\ the rules of
Section~\ref{concrete}: 
\begin{itemize}
\item In rule \textsf{choice} we use the notation $c[!/!^m]$ to denote
  a copy of clause $c$ where the occurrences of (possibly labeled)
  cuts $!$ at predicate positions (e.g., not inside a \textsf{call}),
  if any, are replaced by a \emph{labeled} cut $!^m$, where $m$ is a
  fresh label.  Also, in the derived state, we add a \emph{scope
    delimiter} $?^m$.  

\item Rule \textsf{cut} removes some alternatives from the current
  state, while rule \textsf{cut\_fail} applies when a goal reaches the
  scope delimiter without success.

\item The rules for \textsf{call} and negation should be clear. Let us
  only mention that the notation $A[\cV/\textsf{call}(\cV),!/!^m]$
  denotes the atom $A$ in which all variables $X$ on predicate
  positions are replaced by $\call(X)$ and all (possibly labeled) cuts
  on predicate positions are replaced by $!^m$.

\item Calls to the built-in predicate $\mathsf{is}$ are dealt with
  rules \textsf{is} and \textsf{is\_error} by means of the auxiliary
  function $\eval$. Rules \textsf{rel} and \textsf{rel\_error} proceed
  analogously with relational operators like $>$, $<$, $==$, etc.
\end{itemize}

\begin{figure}[b]
  \footnotesize
  \rule{\linewidth}{1pt}
  \[
  \hspace{-3ex}\begin{array}{r@{~}l}
    \mathsf{(success)} & {\displaystyle 
      \frac{~} 
        {\tuple{\mathsf{true}_\delta\mid S
            \sep\mathsf{true}_{\theta}\mid S'} 
          \leadsto_\diamond
          \tuple{\success_\delta \sep \success_\theta}}
        } \\[3ex]

    \mathsf{(failure)} & {\displaystyle 
      \frac{~} 
        {\tuple{(\mathsf{fail},\cB)_\delta
          \sep (\mathsf{fail},\cB')_\theta} \leadsto_\diamond 
        \tuple{\fail_\delta \sep \fail_\theta}}
        }\\[3ex]

    \mathsf{(backtrack)} & {\displaystyle 
      \frac{S\neq\epsilon} 
        {\tuple{(\mathsf{fail},\cB)\mid S \sep
          (\mathsf{fail},\cB')\mid S'} \leadsto_\diamond \tuple{S\sep S'}}
        }\\[3ex]

     \mathsf{(choice)} &  {\displaystyle 
      \frac{\mathsf{defined}(A) \wedge
        \clauses(A,\cP) = \ol{c_n}\wedge n>0 \wedge m~\mbox{is fresh}
      \wedge \clauses(A',\cP)=\ol{d_k}} 
        {\begin{array}{l@{~}l}
            \tuple{(A,\cB)_\delta\mid S \sep (A',\cB')_\theta\mid S'} \\ 
            \hspace{15ex}\leadsto_{c(\ell(\ol{c_n}),\ell(\ol{d_k}))} 
          \tuple{(A,\cB)_\delta^{c_1[!/!^m]} \mid \ldots\mid
          (A,\cB)_\delta^{c_n[!/!^m]}\mid ~?^m_\delta\mid  S\\
        \hspace{33ex}\sep (A',\cB')_\theta^{c_1[!/!^m]} \mid \ldots\mid
          (A',\cB')_\theta^{c_n[!/!^m]}\mid ~?^m_\theta\mid  S'
        }\end{array}}
        }\\[11ex]

    \mathsf{(choice\_fail)} & {\displaystyle 
      \frac{\mathsf{defined}(A,\cP)\wedge \clauses(A,\cP)=\emptyset
      \wedge \clauses(A',\cP)=\ol{c_k}} 
        {\tuple{(A,\cB)_\delta\mid S \sep (A',\cB')_\theta\mid S'} 
          \leadsto_{c(\emptyset,\ell(\ol{c_k}))} 
          \tuple{(\fail,\cB)_\delta\mid S\sep (\fail,\cB')_\theta\mid S'}}
        }\\[4ex]

    \mathsf{(unfold)} & {\displaystyle 
      \frac{\mgu(A,H_1)=\sigma\wedge\mgu(A',H_1)=\sigma'} 
        {\tuple{(A,\cB)_\delta^{H_1\leftarrow\cB_1}\mid S\sep
          (A',\cB')_\theta^{H_1\leftarrow\cB_1}\mid S'}
          \leadsto_\diamond \tuple{(\cB_1\sigma,\cB\sigma)_{\delta\sigma}\mid S
          \sep (\cB_1\sigma',\cB'\sigma')_{\theta\sigma'}\mid S'}}
        } \\[3ex]

    \mathsf{(cut)} & {\displaystyle 
      \frac{~} 
        {\tuple{(!^m,\cB)_\delta\mid S_1\mid ~?^m_{\delta'}\mid S
          \sep (!^m,\cB')_\theta\mid S'_1\mid ~?^m_{\theta'}\mid S'}
          \leadsto_\diamond \tuple{\cB_{\delta}\mid~?^m_{\delta'}\mid S
          \sep \cB'_{\theta}\mid~?^m_{\theta'}\mid S'}}
        } \\[3ex]

    \mathsf{(cut\_fail)} & {\displaystyle 
      \frac{~} 
        {\tuple{?^m_\delta \mid S\sep ?^m_\theta \mid S'}
          \leadsto_\diamond \tuple{\mathsf{fail}_\delta \mid S\sep \mathsf{fail}_\theta \mid S'}}
        } \\[3ex]

    \mathsf{(call)} & {\displaystyle 
      \frac{A\not\in\cV\wedge m~\mbox{is fresh}} 
      {\begin{array}{l}
          \tuple{(call(A),\cB)_\delta\mid S
        \sep (call(A'),\cB')_\theta\mid S'} \\
          \hspace{10ex}\leadsto_\diamond
          \tuple{(A[\cV/\mathsf{call}(\cV),!/!^m],\cB)_{\delta}\mid ~?^m_{\delta}\mid S
          \sep (A'[\cV/\mathsf{call}(\cV),!/!^m],\cB')_{\theta}\mid ~?^m_{\theta}\mid S'}
      \end{array}}
        } \\[6ex]

    \mathsf{(call\_error)} & {\displaystyle 
      \frac{A\in\cV} 
        {\tuple{(call(A),\cB)_\delta\mid S \sep (call(A'),\cB')_\theta\mid S'}
          \leadsto_\diamond
          \tuple{\error_{\delta}\sep \error_{\theta}}}
        } \\[3ex]

    \mathsf{(not)} & {\displaystyle 
      \frac{m~\mbox{is fresh}} 
        {\begin{array}{l}
            \tuple{(\backslash\!\!+\!\!(A),\cB)_\delta\mid S
          \sep (\backslash\!\!+\!\!(A'),\cB')_\theta\mid S'} \\
          \hspace{10ex}\leadsto_\diamond
          \tuple{(\mathsf{call}(A),!^m,\mathsf{fail})_{\delta}\mid \cB_\delta\mid ~?^m_\delta\mid S
          \sep (\mathsf{call}(A'),!^m,\mathsf{fail})_{\theta}\mid \cB'_\theta\mid ~?^m_\theta\mid S'}
      \end{array}}
        } \\[6ex]

    \mathsf{(unify)} & {\displaystyle 
      \frac{\mgu(t_1,t_2)=\sigma \wedge \mgu(t'_1,t'_2)=\sigma'} 
        {\tuple{(t_1=t_2,\cB)_\delta\mid S\sep (t'_1=t'_2,\cB')_\theta\mid S'}
          \leadsto_{u(t'_1,t'_2)}
          \tuple{\cB\sigma_{\delta\sigma}\mid S\sep \cB'\sigma'_{\delta\sigma'}\mid S'}}
        } \\[3ex]

    \mathsf{(unify\_fail)} & {\displaystyle 
      \frac{\mgu(t_1,t_2)=\mathsf{fail}} 
        {\tuple{(t_1=t_2,\cB)_\delta\mid S\sep (t'_1=t'_2,\cB')_\theta\mid S'}
          \leadsto_{d(t'_1,t'_2)}
          \tuple{\mathsf{fail}_{\delta}\mid S\sep \mathsf{fail}_{\theta}\mid S'
          }}
        } \\[3ex]

    \mathsf{(is)} & {\displaystyle 
      \frac{\mathsf{eval}(e_2) = t_2 \neq \mathsf{error}
      \wedge \mathsf{sym\_eval}(e'_2) = t'_2\wedge \mbox{$X$ is fresh}} 
        {\tuple{(t_1~\mathsf{is}~e_2,\cB)_\delta\mid S
          \sep (t'_1~\mathsf{is}~e'_2,\cB')_\theta\mid S'}
          \leadsto_{is(X,t'_2)}
          \tuple{(t_1=t_2,\cB)_\delta\mid ~ S\sep
          (t'_1=X,\cB')_\theta\mid ~ S'}}
        } \\[3ex]

    \mathsf{(is\_error)} & {\displaystyle 
      \frac{\mathsf{eval}(e_2) = \mathsf{error}} 
        {\tuple{(t_1~\mathsf{is}~e_2,\cB)_\delta\mid S
          \sep (t'_1~\mathsf{is}~e'_2,\cB')_\theta\mid S'}
          \leadsto_\diamond
          \tuple{\error_\delta\sep \error_\theta}}
        } \\[3ex]

    \mathsf{(rel)} & {\displaystyle 
      \frac{\mathsf{eval}(t_1\oplus t_2) = A \in\{\mathsf{true},\mathsf{fail}\}
      \wedge \mathsf{sym\_eval}(t'_1\oplus t'_2) = A'} 
      {\tuple{(t_1~{\oplus}~t_2,\cB)_\delta\mid S\sep
          (t'_1~{\oplus}~t'_2,\cB')_\theta\mid S'}
          \leadsto_{r(A',A)}
          \tuple{(A,\cB)_\delta\mid ~ S\sep (A',\cB')_\theta\mid ~ S'}}
        } \\[3ex]

    \mathsf{(rel\_error)} & {\displaystyle 
      \frac{\mathsf{eval}(t_1\oplus t_2) = \mathsf{error}} 
        {\tuple{(t_1\oplus t_2,\cB)_\delta\mid S\sep (t'_1\oplus t'_2,\cB')_\theta\mid S'}
          \leadsto_\diamond
          \tuple{\error_\delta\sep \error_\theta}}
        } \\[3ex]

    \end{array}
    \]
  \rule{\linewidth}{1pt}
  \caption{Extended concolic execution semantics} \label{fig:concolic3}
\end{figure}

\noindent
Regarding the concolic execution semantics, we follow a similar
approach to that of Section~\ref{concolic}. The labeled transition
rules can be seen in Figure~\ref{fig:concolic3}. Now, we consider six
kinds of labels for $\leadsto$:
\begin{itemize}
\item The labels $\diamond$ and $c(\cL_1,\cL_2)$ with the same meaning
  as in the concolic semantics of Section~\ref{concolic}.

\item The label $u(t_1,t_2)$, which is used to denote a unification
  step, i.e., the step implies that $t_1$ and $t_2$ should unify.

\item In contrast, the label $d(t_1,t_2)$ denotes a disunification,
  i.e., the step implies that $t_1$ and $t_2$ should not unify.

\item The label $is(X,t)$ denotes a step where $\mathsf{is}$ is
  evaluated (see below). 

\item Finally, the label $r(A',A)$ denotes that the relational
  expression $A'$ should be equal to $A\in\{\true,\fail\}$.
\end{itemize}
In particular, in rules \textsf{unify} and \textsf{unify\_fail}, the
labels store the unification that must hold in the step. Note that the
fact that $\mgu(t_1,t_2)=\fail$ does not imply $\mgu(t'_1,t'_2)=\fail$
since $t'_1$ and $t'_2$ might be less instantiated than $t_1$ and
$t_2$.

In rule \textsf{is}, we label the step with $is(X,t'_2)$ which means
that the fresh variable $X$ should be bound to the evaluation of
$t'_2$ after grounding it. Note that introducing such a fresh variable
is required to avoid a failure in the subsequent step with rule
\textsf{unify} because of, e.g., a non-ground arithmetic expression
that could not be evaluated yet to a value using function
$\mathsf{sym\_eval}$.
Note that rule \textsf{is\_error} does not include any label since we
assume that an error in the concrete computation just aborts the
execution and also the test case generation process.

Finally, in rule \textsf{rel} we label the step with $r(A',A)$ where
$A$ is the value $\true/\fail$ of the relational expression in the
concrete goal, and $A'$ is a (possibly nonground) corresponding
expression in the symbolic goal. Here, we use the auxiliary function
$\mathsf{sym\_eval}$ to simplify the relational expression as much as
possible. E.g., $\mathsf{sym\_eval}(3>0) = \true$ but
$\mathsf{sym\_eval}(3+2>X) = 5>X$.

These labels can be used for extending the concolic testing algorithm
of Section~\ref{testing}. For instance, given a concolic execution
step labeled with $r(X>0,\true)$, we have that solving $\neg(X>0)$
will produce a binding for $X$ (e.g., $\{X/0\}$) that will follow an
alternative path. Here, the concolic testing procedure will integrate
a constraint solver for producing solutions to negated constraints. We
find this extension of the concolic testing procedure an interesting
topic for future work.

%%%%%%%%%%%%%%%%%%%%%%%%%%%%%%%%%%%%%%%%
\section{Proofs of Technical Results} \label{appendix:proofs}

\subsection{Concolic Execution Semantics}

\begin{proof}[Proof of Theorem~\ref{th:invariant}]
  Since the base case $i=0$ trivially holds, in the following we only
  consider the inductive case $i>0$. Let $C_i=
  \tuple{\cB^c_\delta\midd S\sep \cD^{c'}_\theta\midd S'}$. By the
  inductive hypothesis, we have $|S|=|S'|$, $\cD\sleq \cB$, $c=c'$ (if
  any), and $p(\ol{X_n})\theta \sleq p(\ol{t_n})\delta$. Now, we
  consider the step $C_i \leadsto C_{i+1}$ and distinguish the
  following cases, depending on the applied rule:
  \begin{itemize}
  \item If the rule applied is $\mathsf{success}$, $\mathsf{failure}$,
    $\mathsf{backtrack}$ or $\mathsf{choice\_fail}$, the claim follows
    trivially by induction.

  \item If the rule applied is $\mathsf{choice}$, let us assume that
    we have $\cB = (A,\cB')$, $\cD = (A',\cD')$ and $\clauses(A,\cP) =
    \ol{c_j}$, $j>0$. Therefore, we have $C_{i+1}=
    \tuple{\cB^{c_1}_\delta\midd\ldots\midd \cB^{c_j}_\delta\midd
      S\sep \cD^{c_1}_\theta\midd\ldots\midd \cD^{c_j}_\theta\midd
      S'}$, and the claim follows straightforwardly by the induction
    hypothesis.

  \item Finally, if the applied rule is $\mathsf{unfold}$, then we
    have that $\cB^c_\delta = (A,\cB')^c_\delta$, $\cD^c_\theta =
    (A',\cD')^c_\theta$ for some clause $c =
    H_1\leftarrow\cB_1$. Therefore, we have $C_{i+1} =
    \tuple{(\cB_1\sigma,\cB'\sigma)_{\delta\sigma}\midd S\sep
      (\cB_1\sigma',\cD'\sigma')_{\theta\sigma'}\midd S'}$, where
    $\mgu(A,H_1)=\sigma$ and $\mgu(A',H_1) = \sigma'$. First, $c=c'$
    holds by vacuity since the goals are not labeled with a
    clause. Also, the number of concrete and symbolic goals is
    trivially the same since $|S|=|S'|$ by the inductive hypothesis.
    Now, by the inductive hypothesis, we have $\cD\sleq\cB$ and thus
    $A'\sleq A$ and $\cD'\sleq\cB'$. Then, since $\sigma=\mgu(A,H_1)$,
    $\sigma'=\mgu(A',H_1)$, $\var(H_1\leftarrow
    \cB_1)\cap\var(A)=\emptyset$, and $\var(H_1\leftarrow
    \cB_1)\cap\var(A')=\emptyset$, it is easy to see that
    $A'\sigma'\sleq A\sigma$ (and thus $\cD'\sigma'\sleq\cB'\sigma$)
    and $\sigma'\sleq \sigma$ when restricted to the variables of $H_1$ 
    (and thus $\cB_1\sigma'\sleq
    \cB_1\sigma$). Therefore, we can conclude
    $(\cB_1\sigma',\cD'\sigma')\sleq (\cB_1\sigma,\cB'\sigma)$.
    Finally, using a similar argument, we have
    $p(\ol{X_n})\theta\sigma'\sleq p(\ol{t_n})\delta\sigma$.
  \end{itemize}
\end{proof}

\subsection{Solving Unifiability Problems}

First, we prove the following invariant which justifies that the
algorithm in Definition~\ref{alg1} is well defined.

\begin{proposition}
  The following statement is an invariant of the loops at
  lines~\ref{algo-msa-while-simple} and
  \ref{algo-msa-while-not-simple} of the algorithm in
  Definition~\ref{alg1}:
  \begin{description}
  \item[$\mathrm{(invariant)}$] (a) $A\approx B$ for all $B\in \cB$
    and (b) $A\leq B'$ for some $B'\in\cB$.
  \end{description}
\end{proposition}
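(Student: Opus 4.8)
The plan is to prove the statement in the standard two-part way for a loop invariant: verify that it holds when control first reaches the loops (initialization), and that a single execution of each loop body re-establishes it (preservation). Since the body of line~\ref{algo-msa-while-simple} and the body of line~\ref{algo-msa-while-not-simple} are the only places where $\cB$ is modified, this suffices.

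Initialization is immediate. On entry $\cB=\{A\}\cup\Hpos$; part (a) holds because $A\approx A$ and, by the input hypothesis, $A\approx B$ for every $B\in\Hpos$, while part (b) holds by taking $B'=A$ (witnessed by the identity substitution, so that $A\leq A$).

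For the simple-pair loop the body replaces $\cB$ by $\cB\eta$ with $\eta=\{X/t\}$. Part (b) is easy and purely algebraic: if $A\leq B'$, say $A\rho=B'$, then $A\rho\eta=B'\eta$ and $B'\eta\in\cB\eta$, so $A\leq B'\eta$ and the witness is simply carried along $\eta$. Part (a) is the real content. I would fix an arbitrary $B\in\cB$, note $A\approx B$ by the induction hypothesis, and show $A\approx B\eta$. If $X\notin\var(B)$ then $B\eta=B$ and there is nothing to prove, so assume $X\in\var(B)$. Here I would exploit the origin of the simple pair: $X$ and $t$ sit at one and the same position $p$ in two elements of $\cB$ that share the same context above $p$, and by (a) both of these elements unify with $A$; together with the maximality clause of step~(2a), which guarantees that $t$ is the most instantiated competing term for $X$, this should let me reconstruct a unifier of $A$ and $B\eta$ from a unifier of $A$ and $B$. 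I expect this reconstruction to be the main obstacle of the whole proof: binding $X$ to $t$ is an \emph{instantiation}, not a generalization, so unifiability with $A$ is not preserved for free and must be extracted from the disagreement structure and the maximality condition.

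For the non-simple-pair loop the body replaces, throughout $\cB$, the occurrences of a disagreement pair $t,t'$ by a fresh variable $U\in\cU$; because $U$ is fresh this is a \emph{generalization} of each affected element. For part (a) I would use the generalization lemma: if $A\approx B$ and $B$ is an instance of the generalized element $B^{\ast}$, then $A\approx B^{\ast}$, proved by combining a unifier of $A$ and $B$ with the generalizing binding, after renaming so that $A$ and $B^{\ast}$ are variable-disjoint. For part (b) I would first establish the structural fact that, at any genuine disagreement position (where the two elements carry different root symbols), $A$ must already carry a variable at some prefix of that position; this follows from (a), since $A$ unifies with both conflicting elements and hence cannot expose a clashing function symbol there. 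That variable lets me generalize the witness $B'$ at the replaced positions into an atom $\widehat{B'}$ that is still an instance of $A$, so $A\leq\widehat{B'}$ is preserved. The only delicate point here is that the same $U$ may be substituted at several positions at once, so one must check that the generalizations remain simultaneously consistent with $A$'s variable structure.
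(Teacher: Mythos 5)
Your overall decomposition (initialization plus preservation of the invariant across each loop body) is the same as the paper's, and three of the four preservation arguments coincide with the paper's own: carrying the witness of (b) along $\eta$ in the simple-pair loop ($A\rho=B'$ gives $A\rho\eta=B'\eta\in\cB\eta$); observing that action~(\ref{algo-msa-not-simple-pair}) only generalizes, so (a) is preserved essentially for free; and, for (b) in the second loop, arguing that since $A$ unifies with both atoms carrying the disagreement pair $t,t'$, the corresponding subterm of $A$ must be more general than both, so replacing $t,t'$ by a fresh $U\in\cU$ keeps $A\leq C_1$. This last argument is exactly the one in the paper, including your (correct) caveat about the same $U$ being introduced at several positions simultaneously.

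The genuine gap is precisely where you announce you expect "the main obstacle": preservation of part (a) under action~(\ref{algo-msa-simple-pair}). You rightly note that $\cB\mapsto\cB\{X/t\}$ is an instantiation, so $A\approx B$ does not yield $A\approx B\{X/t\}$ for free, but you then only promise that the disagreement structure together with the maximality clause of step~(2a) "should let me reconstruct a unifier" --- no reconstruction is actually carried out, and since this is the only non-routine case of the whole proposition, the proof is incomplete without it. For comparison, the paper does not reason about positions or about the maximality condition here at all: it takes a unifier $\theta$ of $A$ and $B$, disposes of the case $X\notin\Var(B)$ trivially, and otherwise checks that the composition $\theta\{X/t\}$ still unifies $A$ with $B\{X/t\}$, i.e., it absorbs the new binding directly into the unifier. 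Were you to pursue your sketch, you would find that the maximality clause of step~(2a) compares $t$ only against the \emph{other candidate terms for $X$} and says nothing about how $t$ relates to $A$, so it does not by itself deliver the needed unifier; you would end up having to produce the explicit substitution-composition computation anyway (and this is also the place where any hidden assumptions on the variable structure of $A$, such as linearity, surface). I would therefore replace the promissory note by that explicit calculation, spelling out why the modified unifier still equalizes $A$ and $B\{X/t\}$ at every occurrence of $X$.
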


\begin{proof}
  Let us first consider the loop at line
  \ref{algo-msa-while-simple}. Clearly, the invariant holds upon
  initialization. Therefore, let us assume that it holds for some
  arbitrary set $\cB$ and we prove it also holds for $\cB' = \cB\eta$
  with $\eta = \{X/t\}$ for some simple disagreement pair $X,t$ (or
  $t,X$).  
  Let us consider part (a).  Since $A\approx B$ for all $B\in\cB$,
  there exist a substitution $\theta$ such that $A\theta=B\theta$ for
  all $B\in\cB$. Consider such an arbitrary $B\in\cB$. If
  $X\not\in\Var(B)$, then part (a) of the invariant holds trivially in
  $\cB'$. Otherwise, $\theta\{X/t\}$ is clearly a unifier $A$ and $B$,
  and it also holds. 
  Consider now part (b). Since $A\leq B'$ for some $B'\in\cB$, there
  exists a substitution $\sigma$ such that $A\sigma=B'$. Using a
  similar argument as before, either $A\sigma=B'$ with $B'\in\cB'$ or
  $A\sigma\{X/t\}=B'\{X/t\}$ with $B'\{X/t\}\in\cB$, and part (b) of
  the invariant also holds in $\cB'$.

  Let us now consider the loop at line
  \ref{algo-msa-while-not-simple}. Clearly, the invariant holds when
  the previous loop terminates. 
  Let $t,t'$ be the selected disagreement pair.
  Then $t,t'$ is replaced in $\cB$ by a fresh variable
  $U\in \cU$, thus obtainining a new set $\cB'$.
  Let $\eta_1:=\{U/t\}$ and $\eta_2:=\{U/t'\}$. Both $\eta_1$ and $\eta_2$
  are idempotent substitutions because $U\not\in \Var(t)$ and $U\not\in
  \Var(t')$ since $U$ is fresh.
  Let $B_1,B_2$ be the atoms of $\cB$ where $t,t'$ come from
  and $C_1,C_2$ be the atoms obtained by replacing
  $t,t'$ in $B_1,B_2$ by $U$. Then $B_1=C_1\eta_1$ and
  $B_2=C_2\eta_2$. 
  Now, we want to prove that the invariant also holds in $\cB' =
  \cB\setminus\{B_1,B_2\}\cup\{C_1,C_2\}$. Part (a) is trivial, since
  we only generalize some atoms: if $A$ unify with $B_1$ and $B_2$, it
  will also unify with $C_1$ and $C_2$. Regarding part (b), we have
  that $A\leq B'$ for some $B'\in\cB$. Clearly, part (b) also holds in
  $\cB'$ if $B'$ is different from $B_1$ and $B_2$. Otherwise,
  w.l.o.g., assume that $B'=B_1$ and $A\leq B_1$. Since $A\approx B_1$
  and $A\approx B_2$, and $t,t'$ is a disagreement pair for $B_1,B_2$,
  we have that the subterm of $A$ that corresponds to the position of
  $t,t'$ should be more general than $t,t'$ (otherwise, it would not
  unify with both terms). Therefore, replacing $t$ by a fresh variable
  $U$ will not change that, and we have $A\leq C_1$ for some
  $C_1\in\cB$.
\end{proof}
The following auxiliary results are useful to prove the correctness of
the algorithms % \textit{Pos} (
in Definitions~\ref{alg1} %) and \textit{PosNeg} (Definition~
and \ref{alg2}. %).

\begin{lemma}\label{lemma:technical-1}
  Suppose that $A\theta=B\theta$ for some atoms $A$ and $B$
  and some substitution $\theta$.
  Then we have $A\theta\eta=B\eta\theta\eta$ for any
  substitution $\eta$ with
  $[\Dom(\eta)\cap \Var(B)]\cap\Dom(\theta)=\emptyset$
  and
  $Ran(\eta)\cap\Dom(\theta\eta)=\emptyset$.
\end{lemma}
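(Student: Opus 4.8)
The plan is to first reduce the claim to a single identity about $B$, and then verify it variable by variable. Applying $\eta$ to both sides of the hypothesis $A\theta=B\theta$ immediately gives $A\theta\eta=B\theta\eta$, so it suffices to prove $B\theta\eta=B\eta\theta\eta$; the desired equality $A\theta\eta=B\eta\theta\eta$ then follows by transitivity. Since the result of applying a substitution to an atom depends only on its action on the atom's variables, this reduces to showing $X\theta\eta=X\eta\theta\eta$ for every $X\in\Var(B)$, where $X\eta\theta\eta$ abbreviates $((X\eta)\theta)\eta$, i.e., the application of the composed substitution $\theta\eta$ to the term $X\eta$.

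The key step is a case split on whether $X$ is moved by $\eta$. If $X\notin\Dom(\eta)$, then $X\eta=X$ and both sides coincide trivially. If $X\in\Dom(\eta)$, I would use the first hypothesis: since $X\in\Var(B)$ and $X\in\Dom(\eta)$, the assumption $[\Dom(\eta)\cap\Var(B)]\cap\Dom(\theta)=\emptyset$ forces $X\notin\Dom(\theta)$, hence $X\theta=X$ and the left-hand side is $X\theta\eta=X\eta$. For the right-hand side I would invoke the second hypothesis: every variable occurring in the term $X\eta$ lies in $\Ran(\eta)$, and since $\Ran(\eta)\cap\Dom(\theta\eta)=\emptyset$, none of those variables belongs to the domain of the composition $\theta\eta$; therefore $X\eta$ is fixed by $\theta\eta$, giving $X\eta\theta\eta=(X\eta)(\theta\eta)=X\eta$. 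Both sides thus equal $X\eta$, which closes this case and hence the lemma.

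The only mildly delicate point, and the place where one must be careful, is the bookkeeping of the composition conventions and the domains: one needs the standard facts that $\Var(X\eta)\subseteq\Ran(\eta)$ for $X\in\Dom(\eta)$, and that a variable outside $\Dom(\theta\eta)$ is left unchanged by $\theta\eta$. Neither condition in the hypothesis is superfluous, which is reassuring as a sanity check on the proof: the first rules out interference between $\theta$ and those variables of $B$ that $\eta$ touches, while the second guarantees that the material introduced by $\eta$ is not subsequently disturbed by $\theta\eta$. I therefore expect the main obstacle to be purely notational, namely stating these two uses cleanly rather than any genuine mathematical difficulty.
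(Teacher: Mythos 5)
Your proof is correct and follows essentially the same route as the paper's: both reduce the claim to showing $X\theta\eta=X\eta\theta\eta$ for each $X\in\Var(B)$, split on whether $X\in\Dom(\eta)$, and in the nontrivial case use the first hypothesis to get $X\theta\eta=X\eta$ and the second to get $X\eta\theta\eta=X\eta$. The only cosmetic difference is that you make the auxiliary facts ($\Var(X\eta)\subseteq\Ran(\eta)$ and stability under $\theta\eta$) explicit, which the paper leaves implicit.
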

\begin{proof}
  For any $X\in\Var(B)$, 
  \begin{itemize}
  \item either $X\not\in\Dom(\eta)$ and then 
    $X\eta\theta\eta=X\theta\eta$
  \item or $X\in\Dom(\eta)$ and then
    $X\eta\theta\eta=(X\eta)\theta\eta=X\eta$
    because $\Ran(\eta)\cap\Dom(\theta\eta)=\emptyset$.
    Moreover, $X\not\in\Dom(\theta)$ because
    $[\Dom(\eta)\cap \Var(B)]\cap\Dom(\theta)=\emptyset$,
    so $X\theta\eta=X\eta$.
    Finally, $X\eta\theta\eta=X\theta\eta$.
  \end{itemize}
  Consequently, $B\eta\theta\eta=B\theta\eta$.
  As $A\theta=B\theta$, we have $A\theta\eta=B\theta\eta$
  i.e. $A\theta\eta=B\eta\theta\eta$.
  %\qed
\end{proof}

\begin{proposition}\label{proposition:invariant-correction-algo-pos-1}
  The loop at line~\ref{algo-msa-while-simple} always terminates
  and the following statement is an invariant of this loop.
  \begin{description}
  \item[$\mathrm{(inv)}$] For each $A'\in\{A\}\cup\Hpos$ there exists 
    $B\in\cB$ and a substitution $\theta$ such that
    $A'\theta=B\theta$ and
    $\Var(\cB)\cap\Dom(\theta)=\emptyset$. 
  \end{description}
\end{proposition}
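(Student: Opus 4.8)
The plan is to treat the two claims separately, both by analysing a single iteration of the loop at line~\ref{algo-msa-while-simple} and exploiting one structural fact about the chosen pair. When the loop selects a simple disagreement pair $X,t$, both $X$ and $t$ are subterms of atoms already occurring in $\cB$; hence $X\in\Var(\cB)$ and $\Var(t)\subseteq\Var(\cB)$, while simplicity guarantees $X\notin\Var(t)$. Consequently, applying $\eta=\{X/t\}$ eliminates $X$ from $\cB$ without introducing any fresh variable, so that $\Var(\cB\eta)=\Var(\cB)\setminus\{X\}$. Note that the maximality requirement in step~(2a) of Definition~\ref{alg1} is irrelevant for this proposition: both statements hold for any choice of simple disagreement pair.

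For termination I would take $|\Var(\cB)|$ as the measure. Each iteration removes exactly the variable $X$ and adds none, so the measure strictly decreases; since it is a nonnegative integer, the loop performs only finitely many iterations.

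For the invariant I would argue by induction on the number of iterations. The base case is immediate: on entering the loop $\cB=\{A\}\cup\Hpos$, so for each $A'\in\{A\}\cup\Hpos$ the witnesses $B=A'\in\cB$ and $\theta=\id$ satisfy $A'\id=B\id$ and $\Var(\cB)\cap\Dom(\id)=\emptyset$. For the inductive step, assume (inv) holds for $\cB$, witnessed for a fixed $A'$ by $B\in\cB$ and $\theta$ with $A'\theta=B\theta$ and $\Var(\cB)\cap\Dom(\theta)=\emptyset$, and let $\cB'=\cB\eta$. I would propose the new witnesses $B'=B\eta\in\cB'$ and $\theta'=\theta\eta$. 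To establish $A'\theta'=B'\theta'$ I would apply Lemma~\ref{lemma:technical-1} to $(A',B)$ with the substitution $\eta$. Its two side conditions follow from the induction hypothesis and the structural facts above: since $\Dom(\eta)=\{X\}$ and $X\in\Var(\cB)$ is disjoint from $\Dom(\theta)$, we get $[\Dom(\eta)\cap\Var(B)]\cap\Dom(\theta)=\emptyset$; and since $\Ran(\eta)=\Var(t)\subseteq\Var(\cB)$ is disjoint from both $\Dom(\theta)$ and $\{X\}=\Dom(\eta)$, while $\Dom(\theta\eta)\subseteq\Dom(\theta)\cup\{X\}$, we get $\Ran(\eta)\cap\Dom(\theta\eta)=\emptyset$. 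The lemma then yields $A'\theta\eta=B\eta\theta\eta$, i.e.\ $A'\theta'=B'\theta'$.

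It remains to verify the domain condition $\Var(\cB')\cap\Dom(\theta')=\emptyset$. Using $\Var(\cB')=\Var(\cB)\setminus\{X\}$ and $\Dom(\theta\eta)\subseteq\Dom(\theta)\cup\{X\}$, the intersection decomposes into $\Var(\cB')\cap\Dom(\theta)\subseteq\Var(\cB)\cap\Dom(\theta)=\emptyset$ and $\Var(\cB')\cap\{X\}=\emptyset$, the latter because $X\notin\Var(\cB')$; this closes the induction. I expect the only real work to be the careful bookkeeping of variable sets needed to discharge the two hypotheses of Lemma~\ref{lemma:technical-1}; the conceptual content is exactly that $\eta$ removes a variable and introduces none.
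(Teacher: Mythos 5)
Your proof is correct and follows essentially the same route as the paper's: the same witnesses $B'=B\eta$ and $\theta'=\theta\eta$, the same appeal to Lemma~\ref{lemma:technical-1}, and the same bookkeeping to discharge its two side conditions from the induction hypothesis and the simplicity of the chosen pair. The only (minor) divergence is the termination measure |you use $|\Var(\cB)|$, which strictly decreases since $\eta=\{X/t\}$ removes $X$ and introduces only variables already in $\Var(\cB)$, whereas the paper argues that the number of simple disagreement pairs decreases; your measure is arguably the more robust of the two.
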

\begin{proof}
  Action~(\ref{algo-msa-simple-pair})
  reduces the number of simple disagreement pairs in
  $\cB$ which implies termination of the 
  loop at line~\ref{algo-msa-while-simple}.

  Let us prove that $\mathrm{(inv)}$ is an invariant.
  First, $\mathrm{(inv)}$ clearly holds upon initialization
  of $\cB$. Suppose it holds prior to an execution of
  action~(\ref{algo-msa-simple-pair}).
  Therefore, for each $A'\in\{A\}\cup\Hpos$ there exists 
    $B\in\cB$ and a substitution $\theta$ such that
    $A'\theta=B\theta$ and
    $\Var(\cB)\cap\Dom(\theta)=\emptyset$.
  Let $t,t'$ be the selected simple disagreement pair.
  Then, we consider a substitution $\eta$
  determined by $t,t'$. For any $X\in\Ran(\eta)$,
  we have $X\in\Var(\cB)$. 
  Thus $X\not\in\Dom(\theta)$ by $\mathrm{(inv)}$.
  Hence
  $\Ran(\eta)\cap\Dom(\theta)=\emptyset$.
  Moreover, as $t,t'$ is a simple pair we have
  $\Ran(\eta)\cap\Dom(\eta)=\emptyset$. Hence,
  \begin{equation}\label{eq:ran-eta}
    \Ran(\eta)\cap\Dom(\theta\eta)=\emptyset\;.
  \end{equation}
  Since $B\in\cB$, we have $[\Dom(\eta)\cap \Var(B)]\cap\Dom(\theta)=\emptyset$.
  Consequently, by~(\ref{eq:ran-eta}) and
  Lemma~\ref{lemma:technical-1} we have
  \[A'\theta\eta=B\eta\theta\eta\;.\]
  Now, we want to prove that $\mathrm{(inv)}$ holds for $\cB\eta$,
  i.e., that for each $A'\in\{A\}\cup\Hpos$ there exists 
    $B\eta\in\cB\eta$ and a substitution $\theta'$ such that
    $A'\theta'=B\eta\theta'$ and
    $\Var(\cB\eta)\cap\Dom(\theta')=\emptyset$.
  We let $\theta'=\theta\eta$, so $A'\theta\eta=B\eta\theta\eta$ holds.
  Now, suppose by contradiction that
  $\Var(\cB\eta)\cap\Dom(\theta\eta)\neq\emptyset$, and let $X$ be one of its elements.
  We have $X\not\in\Dom(\eta)$ because
  $\Ran(\eta)\cap\Dom(\eta)=\emptyset$, so
  $X\in\Dom(\theta)$.
  Moreover, $X\not\in\Ran(\eta)$ by~(\ref{eq:ran-eta}) so
  $X\in\Var(\cB)$. Therefore,
  $X\in\Var(\cB)\cap\Dom(\theta)$
  which by $\mathrm{(inv)}$ gives a contradiction.
  Consequently, 
  \[\Var(\cB\eta)\cap\Dom(\theta\eta)=\emptyset\]
  and the claim follows.
\end{proof}

\begin{proposition}\label{proposition:invariant-correction-algo-pos-2}
  The loop at line~\ref{algo-msa-while-not-simple} always terminates
  and the following statement is an invariant of this loop.
  \begin{description}
  \item[$\mathrm{(inv')}$] For each $A'\in\{A\}\cup\Hpos$ there exists 
    $B\in\cB$ and a substitution $\theta$ such that
    $A'\theta=B\theta$ and
    $\Var(\cB)\cap\Dom(\theta)
    \subseteq \cU$.
  \end{description}
\end{proposition}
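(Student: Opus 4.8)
The plan is to prove both claims following the template of the loop at line~\ref{algo-msa-while-simple} (Proposition~\ref{proposition:invariant-correction-algo-pos-1}), treating each iteration of the loop at line~\ref{algo-msa-while-not-simple} as the replacement of a chosen disagreement pair $t,t'$ by a single fresh variable $U\in\cU$, which sends the current set $\cB$ to a set $\cB'$ in which every occurrence of that pair has become $U$.

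For \textbf{termination} I would exhibit a well-founded measure that strictly decreases at every iteration. The measure I would use is $\mu(\cB)=\sum_{p}\binom{n_p}{2}$, where $p$ ranges over the positions occurring in the atoms of $\cB$ and $n_p$ is the number of distinct root symbols carried at position $p$ by those atoms of $\cB$ that have a subterm there. Since $\cB$ is finite and every atom is finite, $\mu(\cB)\in\nat$. Because a disagreement pair $t,t'$ has distinct roots and the \emph{same} fresh $U$ is substituted for all its occurrences, the replacement merges the two root classes at the affected position into the single class $U$ (fresh, hence distinct from every other root) and deletes all positions strictly below; thus $\mu$ drops by at least one at that position and never grows elsewhere. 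When $\mu(\cB)=0$ all atoms coincide, i.e. $|\cB|=1$ and the loop stops. This is essentially the termination of the most specific generalization procedure~\cite{Plo70} transported to a set of atoms; the only point to check carefully is the ``single fresh variable'' discipline, which is what guarantees genuine progress even when $t$ and $t'$ are themselves variables.

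For the \textbf{invariant} $\mathrm{(inv')}$ I would argue by induction on the number of iterations. For the base case, upon entry to the loop at line~\ref{algo-msa-while-not-simple} the loop at line~\ref{algo-msa-while-simple} has just terminated, so by Proposition~\ref{proposition:invariant-correction-algo-pos-1} the stronger statement $\mathrm{(inv)}$ holds and $\Var(\cB)\cap\Dom(\theta)=\emptyset\subseteq\cU$; hence $\mathrm{(inv')}$ holds initially. For the inductive step, assume $\mathrm{(inv')}$ before an iteration that replaces $t,t'$ by $U$. Fix $A'\in\{A\}\cup\Hpos$ and let $B\in\cB$, $\theta$ be its witnesses, so $A'\theta=B\theta$ and $\Var(\cB)\cap\Dom(\theta)\subseteq\cU$. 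Let $B'\in\cB'$ be the image of $B$. Because $U$ is fresh it occurs in $B'$ exactly at the positions where $t$ (resp. $t'$) was replaced in $B$, so $B=B'\{U/t\}$ (resp. $B=B'\{U/t'\}$); if $B$ was left untouched I simply take $B'=B$. Setting $\theta'=\{U/t\}\theta$ (resp. $\{U/t'\}\theta$, resp. $\theta$) and using $U\notin\Var(A')$ --- which holds since $\Var(\{A\}\cup\Hpos)\cap\cU=\emptyset$ --- gives $A'\theta'=A'\theta=B\theta=(B'\{U/t\})\theta=B'\theta'$.

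It then remains to check the variable condition. We have $\Dom(\theta')\subseteq\{U\}\cup\Dom(\theta)$ and $\Var(\cB')\subseteq\Var(\cB)\cup\{U\}$, the latter because the replacement only discards variables occurring inside the replaced subterms and adds the single variable $U$. Hence for $X\in\Var(\cB')\cap\Dom(\theta')$ either $X=U\in\cU$, or $X\neq U$, in which case $X\in\Var(\cB)\cap\Dom(\theta)\subseteq\cU$ by $\mathrm{(inv')}$; so $\Var(\cB')\cap\Dom(\theta')\subseteq\cU$ and $\mathrm{(inv')}$ is preserved. The main obstacle I anticipate is the bookkeeping relating the witness atom $B$ to its generalized image $B'$: the identity $B=B'\{U/t\}$ is what makes the whole argument go through, and it relies crucially on $U$ being a single fresh variable introduced only at the replaced positions --- exactly the property that also drives the termination measure above.
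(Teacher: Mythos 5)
Your proof is correct and follows essentially the same route as the paper's: the invariant is preserved by exactly the paper's construction (witness $\theta'=\{U/t\}\theta$, the identity $B=B'\{U/t\}$ from the freshness of $U$, $U\notin\Var(A')$, and the bookkeeping $\Dom(\theta')\subseteq\Dom(\theta)\cup\{U\}$, $\Var(\cB')\subseteq\Var(\cB)\cup\{U\}$). The only divergence is cosmetic: for termination the paper simply observes that the number of disagreement pairs decreases, whereas you build an explicit well-founded measure; both rest on the same ``single fresh variable for all occurrences'' discipline that you rightly single out as the crux.
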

\begin{proof}
  Action~(\ref{algo-msa-not-simple-pair})
  reduces the number of disagreement pairs in
  $\cB$ which implies termination of the 
  loop at line~\ref{algo-msa-while-not-simple}.

  Let us prove that $\mathrm{(inv')}$ is an invariant.
  By Proposition~\ref{proposition:invariant-correction-algo-pos-1},
  $\mathrm{(inv)}$ holds upon termination of the loop at
  line~\ref{algo-msa-while-simple}, hence $\mathrm{(inv')}$ holds
  just before execution of the loop at
  line~\ref{algo-msa-while-not-simple}.
  Suppose it holds prior to an execution of
  action~(\ref{algo-msa-not-simple-pair}), so we have that, 
  for each $A'\in\{A\}\cup\Hpos$ there exists 
    $B\in\cB$ and a substitution $\theta$ such that
    $A'\theta=B\theta$ and
    $\Var(\cB)\cap\Dom(\theta)
    \subseteq \cU$.
  Let $t,t'$ be the selected disagreement pair.
  Then $t,t'$ is replaced in $\cB$ by a fresh variable
  $U\in \cU$, thus obtainining a new set $\cB'$.
  Let $\eta_1:=\{U/t\}$ and $\eta_2:=\{U/t'\}$. Both $\eta_1$ and $\eta_2$
  are idempotent substitutions because $U\not\in \Var(t)$ and $U\not\in
  \Var(t')$ since $U$ is fresh.
  Let $B_1,B_2$ be the atoms of $\cB$ where $t,t'$ come from
  and $C_1,C_2$ be the atoms obtained by replacing
  $t,t'$ in $B_1,B_2$ by $U$. Then $B_1=C_1\eta_1$ and
  $B_2=C_2\eta_2$. 
  Now, we want to prove that $\mathrm{(inv')}$ holds in $\cB' =
  \cB\setminus\{B_1,B_2\}\cup\{C_1,C_2\}$, i.e., that for each
  $A'\in\{A\}\cup\Hpos$ there exists $B\in\cB'$ and a substitution
  $\theta$ such that $A'\theta=B\theta$ and
  $\Var(\cB')\cap\Dom(\theta) \subseteq \cU$.
  
  Since $\mathrm{(inv')}$ holds in $\cB$, we have $A'\theta=B\theta$.
  Moreover, $A'=A'\eta_1=A'\eta_2$ because $U$ does not occur in $A'$.
  So if $B=B_1$ then $A'\eta_1\theta=C_1\eta_1\theta$ and
  if $B=B_2$ then $A'\eta_2\theta=C_2\eta_2\theta$.
  Consequently, let us set
  \begin{itemize}
  \item $\theta':=\theta$ and $B':=B$ if $B\not\in\{B_1,B_2\}$
  \item $\theta':=\eta_1\theta$ and $B':=C_1$ if $B=B_1$
  \item $\theta':=\eta_2\theta$ and $B':=C_2$ if $B=B_2$.
  \end{itemize}
  Then we have
  \begin{equation}\label{eq:A-unifies}
    A'\theta'=B'\theta'\;.
  \end{equation}
  Moreover,
  $\Dom(\theta')\subseteq
  \Dom(\theta)\cup\Dom(\eta_1)\cup\Dom(\eta_2)$
  i.e.
  \begin{equation}\label{eq:dom-theta-prime}
    \Dom(\theta')\subseteq
    \Dom(\theta)\cup\{U\}\;.
  \end{equation}
  As $\Var(C_1,C_2)\subseteq\Var(B_1,B_2)\cup\{U\}$ then
  \[\Var(C_2,C_2)\cap\Dom(\theta')
  \subseteq \cU\]
  because 
  $\Var(B_1,B_2)\cap\Dom(\theta)\subseteq \cU$
  by $\mathrm{(inv')}$ and 
  $\Var(B_1,B_2)\cap\{U\}=
  \{U\}\cap\Dom(\theta) = \emptyset$ and
  $\{U\}\cap\{U\}\subseteq \cU$.
  Moreover, by $\mathrm{(inv')}$ we have
  $\Var(\cB)\cap(\Dom(\theta)\cup\{U\})
  \subseteq \cU$ so by~(\ref{eq:dom-theta-prime})
  \[\Var(\cB)\cap\Dom(\theta')
  \subseteq \cU\;.\]
  Hence, $\Var(\cB\setminus\{B_1,B_2\}\cup\{C_1,C_2\})
  \cap\Dom(\theta')\subseteq \cU$.
  With~(\ref{eq:A-unifies}) this implies that upon termination of
  action~(\ref{algo-msa-not-simple-pair}) the invariant
  $\mathrm{(inv')}$ holds
  because $B_1$ is set to $C_1$ and $B_2$ to $C_2$.
\end{proof}
The correctness of the algorithm in Definition~\ref{alg1} is then
stated as follows.

\begin{theorem}\label{theorem:correction-algo-pos}
  Let $A$ be an atom and $\Hpos$ be a set of atoms such that
  $\var(\{A\}\cup\Hpos)\cap\cU=\emptyset$ and $A\approx B$ for all
  $B\in\Hpos$. 
  The algorithm in Definition~\ref{alg1} with input $A$ and $\Hpos$
  always terminates and returns a substitution $\theta$ such that
  $A\theta\eta$ unifies with all the atoms of $\Hpos$ for any
  idempotent substitution $\eta$ with
  $\Dom(\eta)\subseteq\Var(A\theta)$ and
  $\Var(\eta)\cap\cU=\emptyset$.
\end{theorem}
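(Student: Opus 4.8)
The plan is to assemble the three auxiliary results that precede the statement. \emph{Termination} and \emph{well-definedness} of the output come almost for free. Propositions~\ref{proposition:invariant-correction-algo-pos-1} and~\ref{proposition:invariant-correction-algo-pos-2} already establish that the two loops at lines~\ref{algo-msa-while-simple} and~\ref{algo-msa-while-not-simple} terminate, so the whole algorithm terminates and ends with a singleton $\cB=\{B\}$. Moreover, part~(b) of the first invariant ($A\leq B'$ for some $B'\in\cB$) specialises, at $\cB=\{B\}$, to $A\leq B$, which is exactly what step~(\ref{algo-msa-return}) needs in order for a $\theta$ with $A\theta=B$ and $\Dom(\theta)\subseteq\Var(A)$ to exist. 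Hence the first thing I would record is that the returned $\theta$ satisfies $A\theta=B$, $\cB=\{B\}$ and $\Dom(\theta)\subseteq\Var(A)$.

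For the substantive unifiability claim I would use invariant $\mathrm{(inv')}$ of Proposition~\ref{proposition:invariant-correction-algo-pos-2} evaluated \emph{at termination}. Fix an arbitrary $H\in\Hpos$ and an arbitrary idempotent $\eta$ with $\Dom(\eta)\subseteq\Var(A\theta)=\Var(B)$ and $\Var(\eta)\cap\cU=\emptyset$. Taking $A'=H$ in $\mathrm{(inv')}$ yields a substitution $\gamma$ with $H\gamma=B\gamma$ and $\Var(B)\cap\Dom(\gamma)\subseteq\cU$; that is, there is a unifier of $B$ and $H$ whose restriction to $\Var(B)$ binds only the auxiliary variables from $\cU$. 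The intuition I would then make precise is that $\eta$ instantiates only the \emph{non}-$\cU$ variables of $B$, which are precisely the positions left free by $\gamma$, so this instantiation cannot destroy unifiability with $H$.

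To formalise this I would feed $H\gamma=B\gamma$ into Lemma~\ref{lemma:technical-1}, instantiating that lemma with $H$ in place of its $A$, the present $B$ in place of its $B$, $\gamma$ in place of its $\theta$, and the current $\eta$ in place of its $\eta$. This gives $H\gamma\eta=B\eta\gamma\eta$, i.e.\ $\gamma\eta$ is a common unifier of $B\eta=A\theta\eta$ and $H$. The first side condition of the lemma, $[\Dom(\eta)\cap\Var(B)]\cap\Dom(\gamma)=\emptyset$, is immediate: $\Dom(\eta)\subseteq\Var(B)\setminus\cU$ while $\Dom(\gamma)\cap\Var(B)\subseteq\cU$, so the two domains are disjoint.

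\textbf{The main obstacle} is the second side condition, $\Ran(\eta)\cap\Dom(\gamma\eta)=\emptyset$. Since $\Dom(\gamma\eta)\subseteq\Dom(\gamma)\cup\Dom(\eta)$ and idempotency gives $\Ran(\eta)\cap\Dom(\eta)=\emptyset$, what remains is to exclude $\Ran(\eta)\cap\Dom(\gamma)\neq\emptyset$. A range variable of $\eta$ that lies in $\Var(B)$ is non-$\cU$ and thus outside $\Dom(\gamma)$, so the only danger is a variable introduced by $\eta$ that happens to coincide with a variable of $H$ bound by $\gamma$. Because $\mathrm{(inv')}$ asserts only the \emph{existence} of such a $\gamma$, I would dispose of this clash by passing to a variant of $\gamma$ (equivalently, renaming $H$ apart from $\Var(\eta)$), which is harmless for unifiability and preserves both $H\gamma=B\gamma$ and $\Var(B)\cap\Dom(\gamma)\subseteq\cU$. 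Once the side conditions hold, Lemma~\ref{lemma:technical-1} supplies the common instance, and since $H\in\Hpos$ was arbitrary, $A\theta\eta$ unifies with every atom of $\Hpos$, which completes the proof.
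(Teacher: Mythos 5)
You have correctly reconstructed the paper's proof: termination and the existence of $\theta$ with $A\theta=B$ and $\Dom(\theta)\subseteq\Var(A)$ come from Propositions~\ref{proposition:invariant-correction-algo-pos-1} and~\ref{proposition:invariant-correction-algo-pos-2} together with part~(b) of the first invariant, and the unifiability claim is obtained by feeding the unifier $\gamma$ supplied by $\mathrm{(inv')}$ at termination into Lemma~\ref{lemma:technical-1} with exactly the instantiation you describe ($H$ for its $A$, $B$ for its $B$, $\gamma$ for its $\theta$). The paper's own proof is the same computation and simply asserts that ``from all the previous conditions'' both side conditions of the lemma follow. So the only place you genuinely diverge is the point you yourself single out as the main obstacle, namely $\Ran(\eta)\cap\Dom(\gamma\eta)=\emptyset$ --- and there your repair does not work. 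Renaming $H$ (or passing to a ``variant'' of $\gamma$) apart from $\Var(\eta)$ is not harmless for unifiability: when $\Ran(\eta)\cap\Var(H)\neq\emptyset$, the atoms $A\theta\eta$ and $H$ share variables, and unifiability with a renamed copy of $H$ does not transfer back to $H$ itself. Moreover, $\mathrm{(inv')}$ may force $\Dom(\gamma)$ to contain exactly such a shared variable (binding the $B$-side of the disagreement is forbidden unless it lies in $\cU$), so no choice of $\gamma$ avoids the clash.

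The gap is real and cannot be closed without an extra hypothesis: for such $\eta$ the conclusion itself can fail. Take $A=p(X,Y)$ and $\Hpos=\{p(V,a),\,p(V',f(V'))\}$. One run of the algorithm of Definition~\ref{alg1} applies $\{X/V\}$, $\{Y/a\}$, $\{V'/V\}$, reaching $\cB=\{p(V,a),\,p(V,f(V))\}$, and then replaces the non-simple pair $a,f(V)$ by $U\in\cU$, returning $\theta=\{X/V,Y/U\}$ with $A\theta=p(V,U)$. The substitution $\eta=\{V/g(V')\}$ is idempotent, satisfies $\Dom(\eta)\subseteq\Var(A\theta)$ and $\Var(\eta)\cap\cU=\emptyset$, yet $A\theta\eta=p(g(V'),U)$ does not unify with $p(V',f(V'))$ because of the occur check on $V'$. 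The statement (and the paper's proof step you are trying to justify) tacitly assumes that $\Ran(\eta)$ is fresh with respect to $\Var(\Hpos)$ --- which is automatic in the only place the theorem is used, Definition~\ref{alg2}, where $\eta$ can be taken with ground or fresh range. Under that additional hypothesis the side condition is immediate: choosing $\gamma$ relevant gives $\Dom(\gamma)\subseteq\Var(B)\cup\Var(H)$, and $\Ran(\eta)$ misses $\Dom(\gamma)\cap\Var(B)\subseteq\cU$, misses $\Var(H)$ by freshness, and misses $\Dom(\eta)$ by idempotency. That explicit assumption, not a renaming of $H$, is what your proof needs.
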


\begin{proof} 
  Proposition~\ref{proposition:invariant-correction-algo-pos-1}
  and Proposition~\ref{proposition:invariant-correction-algo-pos-2}
  imply termination of the algorithm.
  Upon termination of the loop at line~\ref{algo-msa-while-not-simple}
  we have $|\cB|=1$.
  Let $B$ be the element of $\cB$ with $A\theta=B$. 
  Now, we want to prove that $A\theta\eta$ unifies with all the
  atoms in $\Hpos$ for any idempotent substitution $\eta$ (i.e.,
  $\Dom(\eta)\cap\Ran(\eta)=\emptyset$) such that
  $\Dom(\eta)\subseteq\Var(A\theta)=\Var(B)$
  and $\Var(\eta)\cap\cU=\emptyset$.
  By Proposition~\ref{proposition:invariant-correction-algo-pos-2}, we
  have that, for all $B'\in\Hpos$, there exists a substitution
  $\theta'$ such that $B\theta'=B'\theta'$ and
  $\Var(B)\cap\Dom(\theta')\subseteq \cU$. 
  From all the previous conditions, it follows that
  $[\Dom(\eta)\cap\Var(B)]\cap\Dom(\theta')=\emptyset$ and
  $\Ran(\eta)\cap\Dom(\theta'\eta)=\emptyset$. Therefore, by
  Lemma~\ref{lemma:technical-1}, we have
  $B\eta\theta'\eta=B'\theta'\eta$.  Finally, since
  $A\theta=B$, we have $A\theta\eta\theta'\eta=B'\theta'\eta$
  and, thus, $A\theta\eta$ unifies with $B'$.
\end{proof}

\begin{proof}[Proof of Theorem~\ref{theorem:correction-algo-posneg}]
  Each step of the algorithm terminates, hence the algorithm terminates.
  Assume that the algorithm returns a substitution $\sigma$.
  The set $G\sigma$ is ground by construction.
  By Theorem~\ref{theorem:correction-algo-pos}, we have that $A\sigma
  = A\theta\eta$ unifies with all the atoms in $\Hpos$ as long as
  $\eta$ is idempotent, $\Dom(\eta)\subseteq\Var(A\theta)$ and
  $\Var(\eta)\cap\cU=\emptyset$. Finally, the last check ensures that
  $A\sigma$ does not unify with any atom of $\Hneg$.
  % \qed
\end{proof}

%%%%%%%%%%%%%%%%%%%%%%%%%%%%%%%%%%%%%%%%%%%%%%%%%%%%%%%
\subsubsection{Completeness} \label{sec:completeness}

For simplicity, we ignore the groundness constraint in this
section. Therefore, we now focus on the completeness of the following
unification problem: Let $A$ be an atom and $\Hpos,\Hneg$ be sets of
atoms such that $A\approx B$ for all $B\in\Hpos\cup\Hneg$. Then, we
want to find a substitution $\sigma$ such that
\[
A\sigma\approx B ~\mbox{for all}~B\in\Hpos 
~\mbox{but}~ \neg(A\sigma\approx B') ~\mbox{for all}~ B'\in\Hneg
\hfill(**)
\]
We further assume that all atoms are renamed apart. 

Let us first formalize the notion of unifying substitution:

\begin{definition}[unifying substitution]
  Let $A$ be an atom and let $\cB$ be a set of atoms such that
  $\var(A,\cB)\cap\cU=\emptyset$ and $A\approx B$ for all
  $B\in\cB$. We say that $\sigma$ is a unifying substitution for $A$
  w.r.t.\ $\cB$ if $A\sigma\approx B$ for all $B\in\cB$.
\end{definition}
In particular, we are interested in \emph{maximal} unifying
substitutions computed by the algorithm in Definition~\ref{alg1}.
The relevance of maximal unifying substitutions is that variables from
$\cU$ identify where further instantiation would result in a
substitution which is not a unifying substitution anymore.
For the remaining positions, we basically return their most general
unifier.

Now, we prove that binding an atom $A$ with a maximal unifying
substitution for $A$ w.r.t.\ $\Hpos$ does not affect to the existence
of a solution to our unification problem (**) above.
Here, for simplicity, we assume that only \emph{most specific}
solutions are considered, where a solution $\sigma$ is called a
\emph{most specific} solution for $A$ and $\Hpos,\Hneg$ if there
exists no other solution which is strictly less general than $\sigma$.
Furthermore, we also assume that the atom $A$ has the form
$p(X_1,\ldots,X_n)$.

\begin{lemma}
  Let $A$ be an atom and $\Hpos,\Hneg$ be sets of atoms such that
  $A\approx B$ for all $B\in\Hpos\cup\Hneg$. If there exists a
  substitution $\sigma$ such that $A\sigma\approx B$ for all
  $B\in\Hpos$ and $\neg(A\sigma\approx B)$ for all $B\in\Hneg$, then
  there exists a maximal unifying substitution $\theta$ and a
  substitution $\sigma'$ such that $A\theta\sigma'\approx B$ for all
  $B\in\Hpos$ and $\neg(A\theta\sigma'\approx B)$ for all $B\in\Hneg$.
\end{lemma}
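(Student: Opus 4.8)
The plan is not to factor the hypothesised solution $\sigma$ directly, but to produce a possibly \emph{more instantiated} solution that is reachable through one of the maximal unifying substitutions. Everything hinges on one elementary fact: non-unifiability with the negative atoms is preserved under instantiation, i.e.\ if $A\sigma\not\approx H^-$ and $C$ is any instance of $A\sigma$, then $C\not\approx H^-$ (a unifier of $C$ and $H^-$, composed with the instantiating substitution, would unify $A\sigma$ and $H^-$). Hence it suffices to exhibit a maximal unifying substitution $\theta$ and a substitution $\sigma'$ such that (i) $A\theta\sigma'$ is an instance of $A\sigma$ and (ii) $A\theta\sigma'\approx H^+$ for every $H^+\in\Hpos$. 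Property (i) makes $A\theta\sigma'$ inherit the disunifications with $\Hneg$ enjoyed by $A\sigma$, while (ii) is exactly the positive requirement; together they make $A\theta\sigma'$ the required solution.

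First I would run the nondeterministic algorithm of Definition~\ref{alg1} on $A$ and $\Hpos$, but make every nondeterministic choice \emph{guided by} $\sigma$, maintaining as an invariant that $A\sigma\approx B$ for every atom $B$ currently in $\cB$. This holds at initialisation, since $A\sigma\approx B$ for all $B\in\Hpos$ and $A\sigma$ is trivially an instance of $A$. I would then argue it can be preserved throughout both loops: in the variable-elimination loop one always has an admissible choice of simple pair (and of its direction) that keeps each atom of $\cB$ unifiable with $A\sigma$, and in the generalisation loop the atoms only become more general. The crucial observation concerns the generalisation step: when a disagreement pair $t,t'$ is replaced by a fresh variable $U\in\cU$, the two atoms of $\cB$ carrying $t$ and $t'$ both unify with $A\sigma$ by the invariant, yet they carry distinct non-variable root symbols at that position; since no single non-variable term can unify with two terms having different roots, $A\sigma$ must have a \emph{variable} at the corresponding position. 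Consequently every $\cU$-variable of the resulting $\theta$ sits exactly over a variable position of $A\sigma$.

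Finally I would unify $A\theta$ with $A\sigma$. They are unifiable because the invariant, carried to termination where $\cB=\{A\theta\}$, yields $A\sigma\approx A\theta$; and the previous paragraph lets me choose an idempotent mgu $\mu$ that, at each $\cU$-position, binds the variable supplied by $A\sigma$ rather than the $\cU$-variable itself, so that $\Var(\mu)\cap\cU=\emptyset$ and, after restriction, $\Dom(\mu)\subseteq\Var(A\theta)$. Setting $\sigma':=\mu$, the atom $A\theta\sigma'$ is a common instance of $A\theta$ and $A\sigma$: being an instance of $A\sigma$ it disunifies with every $H^-\in\Hneg$, and since $\sigma'$ avoids $\cU$, Theorem~\ref{theorem:correction-algo-pos} guarantees $A\theta\sigma'\approx H^+$ for every $H^+\in\Hpos$. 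This establishes (i) and (ii) and hence the lemma.

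I expect the main obstacle to be the second paragraph. Proving that the $\sigma$-guided run genuinely exists through \emph{both} loops is delicate, because the variable-elimination phase applies each binding $\{X/t\}$ to all of $\cB$ and may, under an unlucky choice, destroy unifiability with $A\sigma$ (as the two choices $\{Z/a\}$ and $\{Z/b\}$ in Example~\ref{ex-1-g} already illustrate); one must show that an invariant-preserving choice is always available and admissible under the $t<t'$ side-condition of step~(2a). Equally delicate is making rigorous the claimed correspondence between the $\cU$-variables of $\theta$ and the variable positions of $A\sigma$, since disagreements can be nested and can arise only after earlier variable eliminations have been performed.
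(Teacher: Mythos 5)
Your overall architecture is the same as the paper's: run the nondeterministic algorithm of Definition~\ref{alg1} with every choice guided by the hypothesised solution $\sigma$, maintaining unifiability of $A\sigma$ with every atom of $\cB$. Where you genuinely diverge is the endgame. The paper assumes (w.l.o.g.) that $\sigma$ is a \emph{most specific} solution and concludes $\theta\sleq\sigma$ on $\Var(A)$, so that $A\sigma=A\theta\sigma'$ exactly; you instead aim only for a \emph{common instance} of $A\theta$ and $A\sigma$ and invoke the (correct, and worth stating explicitly) monotonicity of disunification under instantiation. That is an attractive variation, not least because the paper's reduction to most specific solutions is itself shaky (with $\Hpos=\{p(Y)\}$ there is no maximally instantiated solution at all), and your route does not need it.

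However, there is a genuine gap at the point where you build $\sigma'$. Take $A=p(X)$, $\Hpos=\{p(f(a)),p(f(b))\}$, so $A\theta=p(f(U))$ with $U\in\cU$, and suppose $A\sigma=p(f(W))$. Any mgu of $A\theta$ and $A\sigma$ must relate $U$ and $W$: it contains either $U/W$ or $W/U$, and in both cases $U\in\Var(\mu)=\Dom(\mu)\cup\Ran(\mu)$, while the second choice also violates $\Dom(\mu)\subseteq\Var(A\theta)$. So the substitution you describe --- ``binds the variable supplied by $A\sigma$ rather than the $\cU$-variable, so that $\Var(\mu)\cap\cU=\emptyset$'' --- does not exist whenever $A\theta$ actually contains a $\cU$-variable, which is precisely the interesting case. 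The repair is to let $\sigma'$ be only the part of the matcher acting on the non-$\cU$ variables of $A\theta$, and to exhibit \emph{separately} a substitution $\rho$ (free to map variables of $A\sigma$ onto $\cU$-variables) with $A\sigma\rho=A\theta\sigma'$; your proposal conflates these two substitutions. But then the claim that $A\theta\sigma'$ is an instance of $A\sigma$ is no longer automatic: if $A\sigma$ is non-linear, say $A\sigma=p(W,W)$ while $A\theta=p(U,a)$, matching forces $U$ to be bound. One can argue that the hypothesis ``$A\sigma\approx B$ for all $B\in\Hpos$'' excludes such configurations (a $\cU$-variable only arises from a clash of distinct function symbols in $\Hpos$, which a repeated variable of $A\sigma$ could not absorb), but that argument is missing, and without it the final step does not go through. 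The delicacies you flag yourself in the guided run (admissibility of the choice under the $t<t'$ side condition, and disagreement pairs whose members are not both non-variable, e.g.\ pairs already containing a $\cU$-variable) are real but comparable to gaps the paper's own sketch leaves open; the $\sigma'$ construction is the step that, as written, fails.
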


\begin{proof} (\emph{sketch})
  Let us consider the stages of the algorithm in Definition~\ref{alg1}
  with input $\Hpos$ (atom $A$ is not needed since it has the form
  $p(X_1,\ldots,X_n)$ and, thus, imposes no constraint).  The first
  stage just propagates simple disagreement pairs of the form $X,t$ or
  $t,X$. When $X$ only occurs once, it is easy to see that $\sigma$ is
  also a (most specific) unifying substitution for $A$ w.r.t.\
  $\Hpos\{X/t\}$. Consider, e.g., that $\sigma$ contains a binding of
  the form $X_i/C[t']$ for some $i\in\{1,\ldots,n\}$ and context
  $C[~]$ and such that $t'$ corresponds to the same position of $X$
  and $t$ in $\Hpos$. Depending on the terms in the corresponding
  position of the remaining atoms, we might have $t'\leq t$ or $t\leq
  t'$. Either case, replacing $X$ by $t$ will not change the fact that
  $\sigma$ is still a most specific unifying substitution for
  $\Hpos\{X/t\}$.

  The step is more subtle when there are several simple disagreement
  pairs for a given variable, e.g., $X,t_1$ and $X,t_2$ (we could
  generalize it to an arbitrary number of pairs, but two are enough to
  illustrate how to proceed). In this case, if $t_1\leq t_2$, we
  choose $X,t_2$ and the reasoning is analogous to the previous case.
  However, when neither $t_1\leq t_2$ not $t_2\leq t_1$, the algorithm
  in Definition~\ref{alg1} is non-deterministic and allows us to
  choose any of them. As before, let us consider that $\sigma$
  contains bindings of the form $X_i/C[t'_1]$ and $X_j/C'[t'_2]$ for
  some $i,j\in\{1,\ldots,n\}$ and contexts $C[~],C'[~]$ and such that
  $t'_1$ and $t'_2$ correspond to the same positions of $t_1$ and
  $t_2$ in $\Hpos$, respectively. Here, assuming there are no further
  constraints from the remaining atoms, a most specific unifying
  substitution might either bind $X_i$ to $C[t_1]$ and leave $X_j$
  unconstrained (e.g., bound to a fresh variable) or the other way
  around: bind $X_i$ to $C[t_2]$ and leave $X_j$ unconstrained. Here,
  we choose the same alternative as in the considered solution
  $\sigma$, say $X_i$ is bound to $C[t_1]$. Therefore, $\sigma$ is
  still a unifying substitution for $A$ w.r.t.\ $\Hpos\{X/t_1\}$. Note
  that the new (non-simple) disagreement pair $t_1,t_2$ introduced in
  $\Hpos\{X/t_1\}$ will be generalized away in the next stage (and
  replaced by a fresh variable from $\cU$).
  
  Therefore, when the first stage is completed (i.e., step 2 in
  Definition~\ref{alg1}), we have propagated some terms from one atom to
  the remaining ones --as in the computation of a most general
  unifier-- thus producing a new set $\Hpos'$ such that $\sigma$ is
  still a (most specific) unifying substitution for $A$ w.r.t.\
  $\Hpos'$.

  By definition, after this stage, there are no simple disagreement
  pairs in $\Hpos'$. 
  Then, in the second stage (step 3 in Definition~\ref{alg1}), we replace
  every (non-simple) disagreement pair $t_1,t_2$ by a fresh variable
  $U$ from $\cU$. Since $\sigma$ was a unifying substitution for
  $\Hpos'$, it should have a binding $X_i/C[W]$ for some
  $i\in\{1,\ldots,n\}$ and context $C[~]$ and such that $W$
  corresponds to the same position of $t_1$ and $t_2$ in $\Hpos'$,
  where $W$ is a variable.  Therefore, replacing $t,t'$ by a fresh
  variable $U$ will not change the fact that $\sigma$ is still a
  unifying substitution for the resulting set (up to variable
  renaming).

  Hence, when the second stage is finished, we have a new set
  $\Hpos''$ without any disagreement pair at all, i.e.,
  $\Hpos''=\{B\}$ with $A\theta=B$. Moreover, since $\sigma$ is a most
  specific uniyfing substitution for $A$ w.r.t.\ $\Hpos''$, we have
  $\theta\leq\sigma~[\var(A)]$. Therefore, there exists a substitution
  $\sigma'$ such that $A\sigma=A\theta\sigma'$ such that $\sigma'$ is
  a solution for $A\theta$ and $\Hpos,\Hneg$, which concludes the
  proof.
\end{proof}

%%%%%%%%%%%%%%%%%%%%%%%%%%%%%%%%%%%%%%%%%
\section{Some More Examples on Solving Unifiability Problems}

\begin{example}[maximal unifying substitution]
  Let $A = p(X,Y)$ and $\Hpos = \{p(s(a),s(c)),p(s(b),s(c)),p(Z,Z)\}$.
  First the algorithm of Definition~\ref{alg1} sets
  $\cB := \{p(X,Y),p(s(a),s(c)),p(s(b),s(c)),p(Z,Z)\}$,
  then it considers the simple disagreement pairs
  in $\cB$. The substitution $\eta_1 := \{X/s(a)\}$
  is determined by $X,s(a)$. Action~(\ref{algo-msa-simple-pair})
  sets $\cB$ to $\cB\eta_1$ i.e. to
  \[\{p(s(a),Y),p(s(a),s(c)),p(s(b),s(c)),p(Z,Z)\}\;.\]
  The substitution $\eta_2 := \{Y/s(c)\}$
  is determined by $Y,s(c)$. Action~(\ref{algo-msa-simple-pair})
  sets $\cB$ to $\cB\eta_2=\{p(s(a),s(c)),p(s(b),s(c)),p(Z,Z)\}$.
  The substitution $\eta_3 := \{Z/s(c)\}$ is determined by
  $Z,s(c)$. Action~(\ref{algo-msa-simple-pair}) sets $\cB$ to
  $\cB\eta_3$ i.e. to
  \[\{p(s(a),s(c)),p(s(b),s(c)),p(s(c),s(c))\}\;.\]
  Now no simple disagreement pair occurs in $\cB$ hence the algorithm
  skips to the loop at line~\ref{algo-msa-while-not-simple}.
  \begin{itemize}
  \item Action~(\ref{algo-msa-not-simple-pair}) replaces 
    the disagreement pair $a,b$ with a
    fresh variable $U\in \cU$, hence $\cB$ is set to
    $\{p(s(U),s(c)),p(s(c),s(c))\}$.
  \item Action~(\ref{algo-msa-not-simple-pair}) replaces 
    the disagreement pair $U,c$
    with a fresh variable $U'\in \cU$, hence $\cB$ is set to
    $\{p(s(U'),s(c))\}$.
  \end{itemize}
  As $|\cB|=1$ the loop at line~\ref{algo-msa-while-not-simple}
  stops and the algorithm returns the substitution $\{X/s(U'),Y/s(c)\}$.

  Note that there are several non-deterministic possibilities
  for $\eta_1$, $\eta_2$ and $\eta_3$. For instance, if we
  consider $\eta_3:=\{Z/s(a)\}$, which is determined by $Z/s(a)$,
  then $\cB$ is set to $\{p(s(a),s(c)),p(s(b),s(c)),p(s(a),s(a))\}$.
  The loop at line~\ref{algo-msa-while-not-simple} finally sets 
  $\cB$ to $\{p(s(U),s(U'))\}$, so the algorithm returns the substitution
  $\{X/s(U),Y/s(U')\}$.
\end{example}
We note that the set $\mathcal{B}$ used by the algorithm
of Definition~\ref{alg1} may contain several occurrences
of a same, non-simple, disagreement pair.

\begin{example}[maximal unifying substitution]
  Let $A = p(X,Y)$ and $\Hpos = \{p(a,a),p(b,b)\}$.
  First the algorithm sets $\cB:=\{p(X,Y),p(a,a),p(b,b)\}$.
  Then the loop at line~\ref{algo-msa-while-simple}
  considers the simple disagreement pairs in $\cB$ and,
  for instance, it sets $\cB$ to $\{p(a,a),p(b,b)\}$
  (it may also set $\cB$ to $\{p(a,b),p(a,a),p(b,b)\}$
  or to $\{p(b,a),p(a,a),p(b,b)\}$).
  As no simple disagreement pair now occurs in $\cB$, the
  algorithm jumps at line~\ref{algo-msa-while-not-simple}.
  The pair $a,b$ occurs twice in $\mathcal{A}$.
  Action~(\ref{algo-msa-not-simple-pair}) replaces each occurrence
  with the same variable $U \in \cU$, so the loop at
  line~\ref{algo-msa-while-not-simple} sets
  $\cB$ to $\{p(U,U)\}$ and the algorithm returns $\{X/U,Y/U\}$.
\end{example}

\begin{example}[maximal unifying substitution]
  Let $A = p(X,Y)$ and $\Hpos = \{p(a,b),p(b,a)\}$.
  First the algorithm sets $\cB:=\{p(X,Y),p(a,b),p(b,a)\}$.
  Then the loop at line~\ref{algo-msa-while-simple}
  considers the simple disagreement pairs in $\cB$ and,
  for instance, it sets $\cB$ to $\{p(a,b),p(b,a)\}$
  (it may also set $\cB$ to $\{p(a,a),p(a,b),p(b,a)\}$
  or to $\{p(b,b),p(a,b),p(b,a)\}$).
  As no simple disagreement pair now occurs in $\cB$, the
  algorithm jumps at line~\ref{algo-msa-while-not-simple}.
  The pairs $a,b$ and $b,a$ occur once in $\mathcal{A}$ and
  Action~(\ref{algo-msa-not-simple-pair}) replaces them with
  two different variables $U,U' \in \cU$.
  So the loop at line~\ref{algo-msa-while-not-simple} sets
  $\cB$ to $\{p(U,U')\}$ and the algorithm returns $\{X/U,Y/U'\}$.
\end{example}

\end{appendix}

%%%%%%%%%%%%%%%%%%%%%%%%%%%%%%%%%%%%%%%%%%%%%%%%%%%%%%%%%%%%%%%%%%%%%%%

\end{document}